\documentclass[journal,twoside,print]{ieeecolor}
\usepackage{generic}
\usepackage{cite}
\usepackage{bbding}
\usepackage{wrapfig}
\usepackage{pifont}
\usepackage{amsmath,amssymb,amsfonts}
\usepackage{algorithmic}
\usepackage{graphicx}
\usepackage{algorithm,algorithmic}
\usepackage{url}
\usepackage[justification=centering]{caption}
\usepackage[T1]{fontenc}
\usepackage{bm}
\usepackage{cases}
\usepackage{textcomp}
\usepackage[hidelinks]{hyperref}

\newtheorem{assumption}{Assumption}
\newtheorem{theorem}{Theorem}
\newtheorem{problem}{Problem}

\newtheorem{lemma}{Lemma}
\newtheorem{remark}{Remark}

\newtheorem{proposition}{Proposition}
\newtheorem{example}{Example}

\begin{document}
\title{Sufficient and Necessary Smooth Barrier-like Conditions for Continuous-Time Stochastic Reach-Avoid Verification \thanks{This work was supported by the National Research Foundation, Singapore, under its RSS Scheme (NRF-RSS2022-009) and the Basic Research Program of the Institute of Software, Chinese Academy of Sciences (Grant No. ISCAS-JCMS-202302).}}
\author{Bai Xue\\
\small KLSS, Institute of Software, Chinese Academy of Sciences, Beijing, China\\
Email: xuebai@ios.ac.cn 
} 

\maketitle

\begin{abstract}
In this paper, we study infinite-horizon reach-avoid verification for continuous-time stochastic systems modeled by stochastic differential equations (SDEs). Given a safe set, a target set, and an initial set, the goal is to certify whether the reach-avoid probability that the system, starting from the initial set, eventually reaches the target set while remaining within the safe set until the target is reached is at least a specified threshold. We study this problem within a barrier-function-based framework, which transforms the verification problem into an existence problem for barrier functions satisfying barrier-like conditions expressed as functional inequalities. We provide sufficient and necessary barrier-like conditions in terms of polynomial barrier functions for infinite-horizon reach-avoid verification under suitable regularity and uniform ellipticity assumptions. We first construct a discounted value function that characterizes lower bounds on the reach-avoid probability. We then show that it is the unique classical solution of an associated elliptic Dirichlet problem. Based on this characterization, we further show that the barrier-like condition proposed in our previous work on finite-horizon reach-avoid verification is not only sufficient for infinite-horizon reach-avoid verification but also necessary whenever the reach-avoid probability is strictly larger than the specified threshold. In particular, whenever the reach-avoid probability is strictly larger than the specified threshold fro every state in the initial set, polynomial barrier functions satisfying this barrier-like condition exist. Furthermore, when the system dynamics are polynomial, we formulate the problem of finding polynomial barrier functions satisfying this barrier-like condition as sum-of-squares (SOS) programs, which are shown to be sound and complete. Finally, we demonstrate the theoretical results on two numerical examples.
\end{abstract}

\begin{IEEEkeywords}
Stochastic Differential Equations; Reach-avoid Verification; Polynomial Barrier Functions; Completeness
\end{IEEEkeywords}

\section{Introduction}
\label{sec:int}

Stochastic differential equations (SDEs) provide a fundamental framework for modeling continuous-time dynamical systems subject to random disturbances and uncertainties \cite{oksendal2003stochastic}. They arise naturally in many safety-critical applications, including autonomous vehicles, robots, and other robotic systems. For such systems, critical safety and performance specifications often need to be verified rigorously before practical deployment. In this context, probabilistic guarantees can provide more informative assessments of system behavior than deterministic guarantees \cite{baier2008principles}.

Among the critical specifications to be verified, reach-avoid verification has attracted increasing attention in recent years due to its explicit consideration of both safety and task completion. Specifically, it considers the probability of eventually reaching a target set while satisfying prescribed safety constraints, thereby providing a natural framework for analyzing the reliability of stochastic autonomous and learning-enabled systems. Barrier-function-based approaches provide an attractive framework for reach-avoid verification because they can establish rigorous guarantees without requiring an explicit characterization of the full distribution of the underlying stochastic process, which can be difficult to obtain in practice. The key idea is to reformulate the verification problem as the existence of a barrier function satisfying functional inequalities, commonly referred to as barrier-like conditions \cite{prajna2004safety,prajna2007framework}. Various computational methods have also been developed to efficiently synthesize such barrier functions; see, e.g., \cite{ahmadi2019dsos,zhao2020synthesizing,peruffo2021automated,chatterjee2023learner}. For systems modeled by SDEs, a variety of sufficient barrier-like conditions have been developed for reach-avoid verification; see, e.g., \cite{meng2022sufficient,meng2024stochastic,xue2024,xue2026new}. Despite these developments, whether such barrier-like conditions are also necessary remains largely unexplored. This gap motivates the development of necessary and sufficient barrier-like conditions for reach-avoid verification of SDEs in this paper.

In this paper, we address this question for infinite-horizon reach-avoid verification of SDEs, where the safe set is open and bounded, and the initial and target sets are compact and contained in the interior of the safe set. Under suitable regularity, boundary smoothness, and uniform ellipticity assumptions, we establish sufficient and necessary barrier-like conditions involving polynomial barrier functions for infinite-horizon reach-avoid verification. To this end, we first establish a connection between the reach-avoid probability and an associated elliptic Dirichlet problem through a discounted reach-avoid value function. We show that this value function is the classical solution of the elliptic Dirichlet problem and provides a lower bound on the reach-avoid probability. We then show that the barrier-like condition for finite-horizon reach-avoid verification developed in \cite{xue2025refined} can also be used for infinite-horizon verification. Moreover, when the reach-avoid probability is strictly larger than the prescribed verification threshold for every state in the initial set, we construct a polynomial barrier function satisfying this condition. The construction is based on the discounted value function and its associated elliptic Dirichlet problem, thereby establishing the necessity of the barrier-like condition. When the system is polynomial, namely, each component of the drift vector and each entry of the diffusion matrix is a polynomial, and the initial, safe, and target sets are semialgebraic, we formulate the synthesis of polynomial barrier functions as an SOS program. Based on Putinar's Positivstellensatz, we establish the soundness and completeness of the resulting SOS formulation. Finally, we demonstrate the theoretical results through two numerical examples.

The main contributions of this paper are summarized below.
\begin{enumerate}
    \item We establish sufficient and necessary barrier-like conditions in terms of polynomial barrier functions  for infinite-horizon reach-avoid verification of SDEs under suitable regularity and uniform ellipticity assumptions.

    \item We develop an SOS-based computational framework for synthesizing the polynomial barrier functions. For polynomial SDEs and compact basic semialgebraic regions, Putinar's Positivstellensatz yields a complete degree hierarchy at the level of polynomial certificates.
\end{enumerate}

\subsection*{Related Work}

Barrier-function-based methods were originally developed for safety verification of deterministic dynamical systems \cite{prajna2004safety,prajna2007convex}, where barrier functions characterize regions separating the initial and unsafe sets. They have since been extended to stochastic systems to provide probabilistic guarantees for specifications such as safety and reach-avoid \cite{prajna2004stochastic,prajna2007framework,steinhardt2012finite,santoyo2021barrier,jagtap2018temporal,liu2019reachability,yu2023safe,vzikelic2023learning,xue2024sufficient,chen2025construction,laurenti2025unifying}, based either on Ville's inequality \cite{ville1939etude} or on relaxations of the underlying equations, as well as to controller synthesis \cite{wang2021safety,vzikelic2023learning,jagtap2020formal,kordabad2024control}. Related developments have also considered probabilistic programs and more general temporal specifications, including $\omega$-regular properties \cite{chakarov2013probabilistic,mciver2017new,kenyon2021supermartingales,abate2024stochastic,wang2025verifying}. Since this work focuses on continuous-time stochastic systems modeled by SDEs, we next review barrier-function-based approaches specifically developed for such systems.

Safety and reach-avoid analysis for SDEs has been associated with solving Kolmogorov equations and, in controlled settings, Hamilton–Jacobi–Bellman equations \cite{koutsoukos2008computational,bujorianu2007new,bujorianu2009stochastic,esfahani2016stochastic}. These equations are often analytically intractable and are commonly solved by discretization-based numerical methods.  In contrast, barrier-function-based methods reduce the verification problem to checking functional inequalities, which can often be certified computationally, for example via SOS programming \cite{parrilo2003semidefinite}. The pioneering works by Prajna et al. \cite{prajna2004stochastic,prajna2007framework} introduced barrier-function-based methods for infinite-horizon safety verification of SDEs, building on Ville's inequality \cite{ville1939etude,kushner1967}. These methods were subsequently extended to finite-horizon safety verification and safe controller synthesis; see, e.g., \cite{steinhardt2012finite,santoyo2021barrier,wang2021safety,feng2020unbounded}.

Compared with safety verification, where barrier functions are typically used to upper-bound the probability of reaching an unsafe set, reach-avoid verification, which seeks to lower-bound the probability of reaching a target set while avoiding an unsafe set, has received comparatively less attention. \cite{meng2024stochastic,meng2022sufficient} proposed stochastic Lyapunov-barrier functions to develop barrier-like conditions on probabilistic reach-avoid-stay specifications.  A barrier-like condition was introduced  in \cite{xue2022reach} for computing $p$-reach-avoid sets of SDEs, which can be straightforwardly extended to infinite-horizon reach-avoid verification. The barrier-like condition is derived by relaxing a system of equations whose solution characterizes the exact reach-avoid probability. This framework was subsequently extended to finite-horizon reach-avoid verification in \cite{xue2023new}, with the barrier-like condition further refined in \cite{xue2025refined}. Recently, \cite{xue2026quantitative} propose barrier-like conditions to lower and upper bound the probability that a system reaches a target over a bounded time horizon and spends a required amount of time inside a target region while obeying safety rules. 

However, existing barrier-function-based approaches for safety and reach-avoid verification of SDEs are primarily sufficient: feasibility of a barrier-like condition guarantees the desired probabilistic property, whereas infeasibility does not imply that the property fails. Recent work has made progress toward necessary conditions for discrete-time stochastic systems \cite{majumdar2025sound,kordabad2026certificates,xue2026sufficient,xue2026converse}, but the continuous-time case remains less developed. This work addresses this gap by establishing a necessary and sufficient barrier-like condition, expressed in terms of polynomial barrier functions, for infinite-horizon reach-avoid verification of continuous-time stochastic systems modeled by SDEs under suitable regularity assumptions. Moreover, we develop an SOS-based semidefinite programming formulation that is complete over polynomial barrier functions through its degree hierarchy.

\noindent\textbf{Notations.}
Throughout the paper, we use the following notation: $\mathbb{R}$ denotes the set of real numbers, $\mathbb{R}^n$ denotes the $n$-dimensional Euclidean space, and $\mathbb{R}^{n\times k}$ denotes the space of real-valued $n\times k$ matrices. For sets $\Delta_1$ and $\Delta_2$, $\partial\Delta_1$, $\overline{\Delta_1}$, and $\Delta_1\setminus\Delta_2$ denote the boundary of $\Delta_1$, the closure of $\Delta_1$, and the set difference between $\Delta_1$ and $\Delta_2$, respectively. For a set $\Delta \subseteq\mathbb R^n$, we denote by $C(\Delta)$, $C^1(\Delta)$, and $C^2(\Delta)$ the space of real-valued continuous functions on $\Delta$, the space of continuously differentiable functions on $\Delta$, and the space of twice continuously differentiable functions on $\Delta$, respectively; $\mathbb{R}[\bm{x}]$denote the ring of real-valued polynomials in the state variables $\bm{x}$.

Besides, we use Hölder spaces to quantify the regularity of
functions and domains. Let $D\subseteq\mathbb R^n$ be a domain and let
$\gamma\in(0,1)$. The space $C^{0,\gamma}(D)$ consists of all continuous
functions $v:D\to\mathbb R$ for which there exists a constant $L>0$ such
that
\[
|v(\bm x)-v(\bm y)|
\leq
L\|\bm x-\bm y\|^\gamma,
\qquad
\forall\bm x,\bm y\in D.
\]
More generally, for an integer $k\geq0$, the space $C^{k,\gamma}(D)$
consists of functions in $C^k(D)$ whose partial derivatives of order $k$
are Hölder continuous with exponent $\gamma$. In particular,
$C^{2,\gamma}(D)$ denotes the space of twice continuously differentiable
functions whose second-order partial derivatives are Hölder continuous with
exponent $\gamma$. For vector-valued functions,
$C^{0,\gamma}(D;\mathbb R^n)$ denotes the space of
$\mathbb R^n$-valued functions whose components are Hölder continuous with
exponent $\gamma$. Equivalently, $\bm v:D\to\mathbb R^n$ belongs to
$C^{0,\gamma}(D;\mathbb R^n)$ if there exists $L>0$ such that
\[
\|\bm v(\bm x)-\bm v(\bm y)\|
\leq
L\|\bm x-\bm y\|^\gamma,
\qquad
\forall\bm x,\bm y\in D.
\] 
A set $\Delta\subseteq\mathbb R^n$ is called a bounded
$C^{2,\gamma}$ domain if $\Delta$ is open, connected, and bounded, and its
boundary $\partial\Delta$ is of class $C^{2,\gamma}$. Equivalently, every
boundary point admits a neighborhood in which, after a suitable translation
and rotation of coordinates, $\partial\Delta$ can be represented as the
graph
\[
x_n=\phi(x_1,\ldots,x_{n-1}),
\]
where $\phi\in C^{2,\gamma}$. These regularity notions will be used throughout the paper to state the
assumptions on the drift, diffusion covariance, and continuation domain.

The remainder of this paper is organized as follows. Section \ref{sec:pre} introduces the SDE model, the reach-avoid problem, and the standing assumptions. Section \ref{sec:sbc} develops a sufficient and necessary barrier-like conditions involving polynomial barrier functions. Section \ref{sec:sdp} presents the SOS
formulation and the corresponding completeness result. Numerical examples
are provided in Section \ref{sec:ex}. Finally, we conclude this paper in Section \ref{sec:con}.

\section{Preliminaries}
\label{sec:pre}

In this section, we introduce continuous-time stochastic systems modeled by SDEs and formulate the infinite-horizon reach-avoid verification problem. We begin by introducing SDEs and the assumptions that ensure the well-posedness of their solutions. We then formulate the infinite-horizon reach-avoid verification problem and present the assumptions adopted throughout this paper for constructing barrier-like conditions.

We consider the continuous-time stochastic system modeled by SDEs of the following form:
\begin{equation}
\label{SDE}
d\bm{X}(t)=\bm{b}(\bm{X}(t))dt +
\bm{\sigma}(\bm{X}(t))d\bm{W}(t),
\end{equation}
where $\bm{W}=(W_1,\ldots,W_k)^\top$ is a standard $k$-dimensional
Wiener process defined on a probability space
$(\Omega,\mathcal{F},\mathbb{P})$. The expectation with respect to $\mathbb{P}$ is denoted by $\mathbb{E}[\cdot]$.

We impose the following standard assumptions on the drift and diffusion coefficients.

\begin{assumption}[Local well-posedness]
\label{assump:sde_conditions}
The drift $\bm{b}:\mathbb R^n\to\mathbb R^n$ and diffusion
$\bm{\sigma}:\mathbb R^n\to\mathbb R^{n\times k}$ are locally Lipschitz continuous; that is, for every compact set $K\subset\mathbb R^n$, there
exists $L_K>0$ such that
\[
\|\bm{b}(\bm{x})-\bm{b}(\bm{y})\|
+
\|\bm{\sigma}(\bm{x})-\bm{\sigma}(\bm{y})\|_F
\leq
L_K\|\bm{x}-\bm{y}\|, \forall\bm{x},\bm{y}\in K,
\]
where $|\cdot|_F$ denotes the Frobenius norm.
\end{assumption}

Under Assumption \ref{assump:sde_conditions}, for every initial state $\bm{x}_0\in\mathbb R^n$, the SDE \eqref{SDE} admits a unique maximal strong solution, denoted by $\bm{X}_{\bm{x}_0}(t)$. In the analysis below, we only consider this solution up to the first exit time from the bounded continuation domain introduced below. 

For a twice continuously differentiable function  $v:\mathbb{R}^n\to\mathbb{R}$, the infinitesimal generator associated with \eqref{SDE} is defined by
\begin{equation}
\label{infi}
\mathcal{L}v(\bm{x})
=
\lim_{h\downarrow0}
\frac{
\mathbb{E}_{\bm{x}}
\left[
v(\bm{X}_{\bm{x}}(h))
\right]
-v(\bm{x})
}{h},
\end{equation}
whenever the limit exists, where $\mathbb{E}_{\bm{x}}$ denotes
expectation for the solution initialized at $\bm{x}$. When the initial state is clear from the context, we simply write $\mathbb{E}$.

\begin{proposition}
\label{prop:inf_generator}
Consider the system~\eqref{SDE} under
Assumption~\ref{assump:sde_conditions}. For every
$v\in C^2(\mathbb{R}^n)$, the infinitesimal generator is given by
\begin{equation}
\label{generator}
\mathcal{L}v(\bm{x})= \nabla v(\bm{x})^\top\bm{b}(\bm{x}) +
\frac{1}{2}
\operatorname{tr}
\left(
\bm{\sigma}(\bm{x})^\top
\bm{H}v(\bm{x})
\bm{\sigma}(\bm{x})
\right),
\end{equation}
Here, $\bm{H}v(\bm{x}) = \nabla^2 v(\bm{x})$ denotes the Hessian matrix of $v$ at $\bm{x}$.
Equivalently, by defining
\[
\bm{a}(\bm{x}):= \bm{\sigma}(\bm{x})\bm{\sigma}(\bm{x})^\top,
\]
we have
\[
\mathcal{L}v(\bm{x})=\nabla v(\bm{x})^\top\bm{b}(\bm{x})+
\frac{1}{2} \operatorname{tr}
\left(
\bm{a}(\bm{x})\bm{H}v(\bm{x})
\right).
\]

Moreover, whenever $v$ satisfies the regularity and integrability
conditions required for Dynkin's formula, then for every almost surely finite stopping time $\tau$,
\begin{equation}
\label{dynkin}
\mathbb{E}
\left[
v(\bm{X}_{\bm{x}_0}(\tau))
\right]
=
v(\bm{x}_0)+ \mathbb{E}
\left[
\int_0^\tau
\mathcal{L}v(\bm{X}_{\bm{x}_0}(s))ds
\right].
\end{equation}
\end{proposition}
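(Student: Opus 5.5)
The plan is to derive both claims from Itô's formula applied to $v(\bm X_{\bm x}(t))$, with localization used to handle the fact that the solution is only maximal and the coefficients are only locally Lipschitz.

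\emph{Itô's formula and localization.} Fix $\bm x\in\mathbb R^n$ and $v\in C^2(\mathbb R^n)$. For $R>\|\bm x\|$, let $\tau_R=\inf\{t\ge 0:\|\bm X_{\bm x}(t)\|\ge R\}$; by maximality of the strong solution, $\tau_R>0$ almost surely. Itô's formula then gives, for $t\ge0$,
\[
v(\bm X_{\bm x}(t\wedge\tau_R))=v(\bm x)+\int_0^{t\wedge\tau_R}\mathcal L v(\bm X_{\bm x}(s))\,ds+\int_0^{t\wedge\tau_R}\nabla v(\bm X_{\bm x}(s))^\top\bm\sigma(\bm X_{\bm x}(s))\,d\bm W(s),
\]
where $\mathcal L v$ is the expression in \eqref{generator}. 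The second-order term arises from $d\langle X_i,X_j\rangle=a_{ij}\,dt$, and one checks the identity $\operatorname{tr}(\bm\sigma^\top \bm H v\,\bm\sigma)=\operatorname{tr}(\bm a\bm Hv)$ using the cyclicity of the trace. On the closed ball $\overline{B_R}$, the functions $\bm b$, $\bm\sigma$, $\nabla v$ and $\bm Hv$ are continuous and hence bounded. The stochastic integrand is therefore bounded, so the stopped stochastic integral is a true martingale with zero mean. Taking expectations yields
\[
\mathbb E[v(\bm X_{\bm x}(h\wedge\tau_R))]-v(\bm x)=\mathbb E\Big[\int_0^{h\wedge\tau_R}\mathcal Lv(\bm X_{\bm x}(s))\,ds\Big].
\]

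\emph{Generator formula.} Divide the previous identity by $h$ and let $h\downarrow0$. The integrand is bounded by $\sup_{\overline{B_R}}|\mathcal Lv|$ and is continuous in $s$ along paths. Since $\tau_R>0$ almost surely, we have $\frac1h\int_0^{h\wedge\tau_R}\mathcal Lv(\bm X_{\bm x}(s))\,ds\to\mathcal Lv(\bm x)$ almost surely, and dominated convergence gives convergence of the expectations. The remaining step, which I expect to be the main obstacle, is replacing $h\wedge\tau_R$ by $h$ in the numerator of \eqref{infi}. The difference is supported on $\{\tau_R<h\}$. I would show $\mathbb P(\tau_R<h)=o(h)$: because the drift and diffusion are bounded on $B_R$, a Burkholder/Chebyshev estimate of order four on $\sup_{s\le h\wedge\tau_R}\|\bm X_{\bm x}(s)-\bm x\|$ gives $\mathbb P(\tau_R<h)\le Ch^2$. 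To bound $|v|$ on this event, I would interpret \eqref{infi} in the standard localized sense, or equivalently assume $v$ has compact support or polynomial growth with moment bounds; this is harmless in the paper, since all later uses concern processes stopped at the exit from a bounded domain. The probability estimate then shows the discrepancy is $o(h)$, and \eqref{generator} follows.

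\emph{Dynkin's formula.} Take the localized identity with $t$ replaced by the stopping time $\tau\wedge\tau_R\wedge t$; optional stopping for the bounded martingale again kills the stochastic integral. Then let $t\to\infty$ and $R\to\infty$. Because $\tau<\infty$ almost surely, the "regularity and integrability conditions" in the statement are precisely what is needed to pass to the limit: either $v(\bm X(\cdot\wedge\tau))$ is uniformly integrable and $\mathbb E\int_0^\tau|\mathcal Lv|\,ds<\infty$, or, in the paper's setting, $\tau$ is dominated by the exit time from a bounded set, so all quantities are bounded and $\mathbb E[\tau]<\infty$. Dominated convergence then yields \eqref{dynkin}.
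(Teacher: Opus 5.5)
Your proposal is correct and follows essentially the standard argument behind the reference the paper cites in place of a proof (\cite{oksendal2013stochastic}): It\^o's formula, a zero-mean stochastic integral and division by $h$ for the generator, then optional stopping plus a limit passage for Dynkin's formula. Your localization at $\tau_R$ is the natural adaptation to the merely locally Lipschitz coefficients of Assumption~\ref{assump:sde_conditions}, and your caveat that \eqref{infi} needs a localized reading (or compactly supported $v$, which is how the textbook states it) is well taken; the one detail to fix is that in your polynomial-growth variant you must combine $\mathbb{P}(\tau_R<h)\le Ch^2$ with H\"older's inequality at an exponent $r>2$ (or use a higher-order moment bound), because Cauchy--Schwarz alone gives an $O(h)$ discrepancy rather than $o(h)$.
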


See, e.g.,~\cite{oksendal2013stochastic}.

We study an infinite-horizon reach-avoid verification problem.
Let $\mathcal{X}\subset\mathbb{R}^n$ be a safe set, $\mathcal{X}_r\subset\mathcal{X}$ a  target set, and
$\mathcal{X}_0\subset\mathcal{X}\setminus\mathcal{X}_r$ a set of initial states. For $\bm{x}_0\in\mathcal{X}\setminus\mathcal{X}_r$,
define the target hitting time and the safe-set exit time by
\[
\tau_r(\bm{x}_0):=
\inf\left\{
t\geq 0 \mid 
\bm{X}_{\bm{x}_0}(t)\in\mathcal{X}_r
\right\},
\]
and
\[
\tau_{\mathcal{X}}(\bm{x}_0):=
\inf\left\{
t\geq0:
\bm{X}_{\bm{x}_0}(t)\notin\mathcal{X}
\right\}.
\]
The corresponding infinite-horizon reach-avoid probability is
\begin{equation}
\label{eq:ra_probability}
\mathbb{P}_{\mathrm{RA}}(\bm{x}_0):=
\mathbb{P}
\left(
\tau_r(\bm{x}_0)< \tau_{\mathcal{X}}(\bm{x}_0)
\right).
\end{equation}

\begin{problem}[Infinite-horizon Reach-avoid Verification]
\label{un_proII}
Given a prescribed probability threshold $\epsilon \in (0,1)$
the infinite-horizon reach-avoid verification problem is to establish 
\[
\mathbb{P}_{\mathrm{RA}}(\bm{x}_0)\geq\epsilon,
\qquad
\forall\bm{x}_0\in\mathcal{X}_0.
\]
\end{problem}

We study a necessary and sufficient barrier-like condition of Problem \ref{un_proII} in terms of polynomial barrier functions. The following assumptions specify the system and set-theoretic conditions used throughout the paper.

\begin{assumption}
\label{assm2}
\begin{enumerate}
\item The safe set $\mathcal{X}$ is open and bounded, the target set $\mathcal{X}_r\subset\mathcal{X}$ is compact with non-empty interior, implying $\partial\mathcal{X}\cap\partial\mathcal{X}_r=\emptyset$, and the initial set $\mathcal{X}_0\subset\mathcal{X}\setminus\mathcal{X}_r$ is compact. The continuation set
\[
\mathcal{D}:=\mathcal{X}\setminus\mathcal{X}_r
\]
is bounded and open, and $\partial\mathcal{D}= \partial\mathcal{X}\cup\partial\mathcal{X}_r$ is a compact $C^{2,\gamma}$ hypersurface. Moreover, $\mathcal{D}$ has
finitely many connected components
\[
\mathcal{D}=\bigcup_{\ell=1}^{N}\mathcal{D}_\ell,
\]
whose closures are pairwise disjoint:
\[
\overline{\mathcal{D}_\ell}\cap\overline{\mathcal{D}_j}
=
\emptyset,
\qquad \ell\neq j.
\]

    \item The diffusion matrix $\bm{a}$ is uniformly elliptic on $\overline{\mathcal{D}}$: there exists $\lambda>0$
    such that
    \[
    \bm{\xi}^\top
    \bm{a}(\bm{x})
    \bm{\xi}
    \geq
    \lambda\|\bm{\xi}\|^2,
    \qquad
    \forall\bm{x}\in\overline{\mathcal{D}},
    \quad
    \forall\bm{\xi}\in\mathbb{R}^n.
    \]
   This is a uniform non-degeneracy condition on the diffusion coefficient.
\end{enumerate}
\end{assumption}

\section{Necessary and Sufficient Barrier-like Conditions}
\label{sec:sbc}
In this section, we introduce necessary and sufficient barrier-like conditions in terms of polynomial barrier functions for infinite-horizon reach-avoid verification. Our starting point is a discounted reach-avoid value function, which provides a lower bound on the reach-avoid probability and solves the Dirichlet problem associated with the generator $\mathcal{L}$. We then show that the barrier-like condition, originally proposed in \cite{xue2025refined} for finite-horizon reach-avoid verification, is necessary and sufficient for infinite-horizon reach-avoid
verification when the reach-avoid probability is strictly larger than $\epsilon$ for every state $\bm{x}_0\in \mathcal{X}_0$, i.e., $\mathbb{P}_{\mathrm{RA}}(\bm{x}_0)>\epsilon, \forall \bm{x}_0\in \mathcal{X}_0$.

 We begin by introducing the exit time and the discounted value function, following the construction developed for infinite-horizon reach-avoid verification of stochastic discrete-time systems in \cite{xue2026sufficient,xue2026converse}. For $\bm{x}\in\mathcal{D}$, define
\[
\tau_{\mathcal{D}}:=\inf\left\{
t\geq 0\mid 
\bm{X}_{\bm{x}}(t)\notin\mathcal{D}
\right\}.
\]
For a fixed $\alpha>0$, define the boundary function
\[
g(\bm{x})=\begin{cases}
1, & \bm{x}\in\partial\mathcal{X}_r,\\
0, & \bm{x}\in\partial\mathcal{X},
\end{cases}
\]
and the discounted reach-avoid value function
\[
v_{\alpha}(\bm{x}):=
\mathbb{E}
\left[
e^{-\alpha\tau_{\mathcal{D}}}
g\!\left(
\bm{X}_{\bm{x}}(\tau_{\mathcal{D}})
\right)
\mathbf{1}_{\{\tau_{\mathcal{D}}<\infty\}}
\right],
\qquad
\bm{x}\in\mathcal{D}.
\]
Since the sample paths are continuous,
\[
v_{\alpha}(\bm{x})=\mathbb{E}
\left[
e^{-\alpha\tau_r}
\mathbf{1}_{\{\tau_r<\tau_{\mathcal{X}}\}}
\right],
\qquad
\forall \bm{x}\in\mathcal{D}.
\]

 In the following, our first result characterizes $v_\alpha$ as the unique classical solution of the associated Dirichlet problem and provides the maximum-principle bounds that will be used repeatedly below.

\begin{lemma}
\label{diri}
Under Assumptions \ref{assump:sde_conditions} and \ref{assm2},
the function $v_{\alpha}$ belongs to
$C^{2,\gamma}(\overline{\mathcal{D}})$ and is the unique classical solution of
\begin{equation}
\label{d_equation}
\begin{cases}
\mathcal{L}u-\alpha u=0,
& \bm{x}\in\mathcal{D},\\
u=1,
& \bm{x}\in\partial\mathcal{X}_r,\\
u=0,
& \bm{x}\in\partial\mathcal{X},
\end{cases}
\end{equation}
where
\[
\mathcal{L}u(\bm{x})=
\nabla_{\bm{x}}u(\bm{x})^{\top}\bm{b}(\bm{x})+
\frac{1}{2}
\operatorname{tr}
\left(
\bm{a}(\bm{x})\bm{H}u(\bm{x})
\right).
\]
Moreover,
\[
0\leq v_{\alpha}(\bm{x})\leq1,
\qquad
\forall \bm{x}\in\overline{\mathcal{D}}.
\]
\end{lemma}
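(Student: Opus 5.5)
The plan is to go through PDE theory first and only afterwards identify the solution with $v_\alpha$ via Dynkin's formula.

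\textbf{Step 1: a classical solution exists.}
Under Assumption~\ref{assm2}, $\mathcal{D}$ is a bounded domain with $C^{2,\gamma}$ boundary, and $\bm a$ is uniformly elliptic on $\overline{\mathcal D}$. The zeroth-order coefficient $-\alpha$ is strictly negative. The boundary datum $g$ is locally constant on $\partial\mathcal D=\partial\mathcal X\cup\partial\mathcal X_r$. Since these two pieces are disjoint compact hypersurfaces, $g$ extends to a function in $C^{2,\gamma}(\overline{\mathcal D})$, for instance a smooth cutoff equal to $1$ near $\partial\mathcal X_r$ and $0$ near $\partial\mathcal X$.

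For the Schauder theory (Gilbarg--Trudinger, Thm.~6.14) I need $\bm b,\bm a\in C^{0,\gamma}(\overline{\mathcal D})$. Local Lipschitz continuity on the compact set $\overline{\mathcal D}$ (Assumption~\ref{assump:sde_conditions}) gives Lipschitz, hence $\gamma$-Hölder, continuity of $\bm b$ and $\bm\sigma$. It follows that $\bm a=\bm\sigma\bm\sigma^\top$ is also Hölder continuous.

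With these facts, Thm.~6.14 gives a unique $u\in C^{2,\gamma}(\overline{\mathcal D})$ solving \eqref{d_equation}. The theorem is stated for connected domains, so I apply it componentwise on each $\mathcal D_\ell$. Because the closures $\overline{\mathcal D_\ell}$ are pairwise disjoint, the pieces glue into a single $C^{2,\gamma}$ function.

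\textbf{Step 2: uniqueness and the bounds $0\le u\le 1$.}
Both follow from the weak maximum principle for $\mathcal L-\alpha$ with $\alpha>0$. If $u$ attained a positive interior maximum, then at that point $\nabla u=0$ and $\operatorname{tr}(\bm a\bm Hu)\le 0$. This would give $\mathcal Lu-\alpha u<0$, a contradiction. Hence $\max u\le\max(0,\max_{\partial\mathcal D}g)=1$, and symmetrically $\min u\ge 0$. Applying the same argument to the difference of two solutions gives uniqueness.

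\textbf{Step 3: identification $u=v_\alpha$ (the main obstacle).}
Fix $\bm x\in\mathcal D$. Apply Itô's formula to $e^{-\alpha t}u(\bm X(t))$ up to the stopping time $\tau_n=\tau_{\mathcal D}\wedge n$. Since $\bm X$ stays in the compact set $\overline{\mathcal D}$ before exit, and $u\in C^2(\overline{\mathcal D})$ with bounded derivatives there, the stochastic integral is a true martingale. The drift term is $e^{-\alpha s}(\mathcal Lu-\alpha u)=0$. Therefore
\[
u(\bm x)=\mathbb E\bigl[e^{-\alpha\tau_n}u(\bm X(\tau_n))\bigr].
\]

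One technical point is that $u$ is only defined on $\overline{\mathcal D}$. I would either extend $u$ to a $C^2$ function on $\mathbb R^n$, or localize with interior domains $\mathcal D_k\uparrow\mathcal D$ and use continuity of $u$ up to the boundary.

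Now let $n\to\infty$. On $\{\tau_{\mathcal D}<\infty\}$ the integrand tends to $e^{-\alpha\tau_{\mathcal D}}g(\bm X(\tau_{\mathcal D}))$, using continuity of paths and of $u$ on $\overline{\mathcal D}$. On $\{\tau_{\mathcal D}=\infty\}$ it is bounded by $e^{-\alpha n}\to0$ because $|u|\le1$. Bounded convergence then yields $u=v_\alpha$. Note that this step does not even require $\tau_{\mathcal D}<\infty$ almost surely; the discount handles the event $\{\tau_{\mathcal D}=\infty\}$.

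At boundary points, $v_\alpha$ is defined by continuity, and it matches $g$ there since $\tau_{\mathcal D}=0$. Consequently $v_\alpha\in C^{2,\gamma}(\overline{\mathcal D})$, and it inherits the bounds from Step~2.

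I expect the main care to be needed in justifying Itô/Dynkin up to the boundary for a function that is regular only on $\overline{\mathcal D}$. This is handled by the extension or localization in Step~3, while Step~1 is a direct citation.
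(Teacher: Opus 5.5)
Your proposal is correct and follows essentially the same route as the paper. Both arguments use Schauder existence via Theorem~6.14 of Gilbarg--Trudinger on each connected component $\mathcal{D}_\ell$, glued using the disjointness of the closures. Both then use the maximum principle for $\mathcal{L}-\alpha$ to get $0\le u\le 1$ and uniqueness. Finally, both apply It\^o's formula to $e^{-\alpha(t\wedge\tau_{\mathcal{D}})}u(\bm{X}_{\bm{x}}(t\wedge\tau_{\mathcal{D}}))$ and pass to the limit by bounded convergence to identify $u=v_\alpha$.

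Your additional care only tightens details the paper passes over quickly: you verify H\"older continuity of $\bm{b}$ and $\bm{a}$, and you extend $u$ past $\overline{\mathcal{D}}$ to justify It\^o's formula. You also observe that the discount makes the almost-sure finiteness of $\tau_{\mathcal{D}}$, which the paper cites from Baldi, unnecessary.
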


\begin{proof}
We first establish the existence and uniqueness of the classical
solution of the Dirichlet problem.

The boundary data
\[
g(\bm x)
=
\begin{cases}
1, & \bm x\in\partial\mathcal{X}_r,\\
0, & \bm x\in\partial\mathcal{X}
\end{cases}
\]
are constant on each of the two disjoint boundary parts. Hence
\[
g\in C^{2,\gamma}(\partial\mathcal{D}).
\]
Moreover, since $\partial\mathcal{X}_r$ and $\partial\mathcal{X}$ are
disjoint compact sets, $g$ admits a $C^{2,\gamma}$ extension to an open
neighborhood of $\partial\mathcal{D}$.

Let $\mathcal{D}_\ell$ be any connected component of $\mathcal{D}$.
By Assumption \ref{assm2}, $\mathcal{D}$ has finitely many connected components with pairwise disjoint closures. Since $\mathcal{D}$ is bounded, each $\mathcal{D}_\ell$ is bounded. Moreover,
\[
\partial\mathcal{D}=\bigsqcup_{\ell=1}^{N}\partial\mathcal{D}_\ell,
\]
where $\bigsqcup$ denotes a disjoint union. Hence, each
$\partial\mathcal{D}_\ell$ is a union of connected components of the compact $C^{2,\gamma}$ hypersurface $\partial\mathcal{D}$ and is
therefore itself a $C^{2,\gamma}$ hypersurface. Thus, each
$\mathcal{D}_\ell$ is a bounded and open domain with $C^{2,\gamma}$ boundary.

 Then, under Assumptions \ref{assump:sde_conditions} and \ref{assm2}, Theorem 6.14 in \cite{gilbarg1977elliptic} implies that, for every $\alpha>0$, the Dirichlet problem
\[
\begin{cases}
(\mathcal{L}-\alpha)u_\ell=0,
& \text{in }\mathcal{D}_\ell,\\[2mm]
u_\ell=g,
& \text{on }\partial\mathcal{D}_\ell
\end{cases}
\]
admits a unique solution
\[
u_\ell\in C^{2,\gamma}(\overline{\mathcal{D}_\ell}).
\]
Indeed, in the standard form of a uniformly elliptic operator, the
second-order coefficient matrix is $\frac{1}{2}\bm{a}$, the first-order coefficient vector is $\bm{b}$, and the zero-order coefficient of $\mathcal{L}-\alpha$ is $-\alpha<0$. Hence, the coefficient regularity, uniform ellipticity, boundary regularity, and boundary-data regularity required by Theorem~6.14 are satisfied. Uniqueness follows from the
maximum principle.

Solving the Dirichlet problem independently on each connected component and combining the resulting solutions gives
\[
u(\bm{x})=u_\ell(\bm{x}),
\qquad
\bm{x}\in\overline{\mathcal{D}_\ell},
\]
which is well defined because the closures
$\overline{\mathcal{D}_\ell}$ are pairwise disjoint. Since
\[
\overline{\mathcal{D}}=\bigcup_{\ell=1}^{N}\overline{\mathcal{D}_\ell},
\]
the resulting function satisfies
\[
u\in C^{2,\gamma}(\overline{\mathcal{D}})
\]
and
\[
\begin{cases}
(\mathcal{L}-\alpha)u=0,
& \text{in }\mathcal{D},\\[2mm]
u=g,
& \text{on }\partial\mathcal{D}.
\end{cases}
\]

We next prove that
\[
0\leq u\leq1
\qquad\text{on }\overline{\mathcal{D}}.
\]

Suppose first that $u$ takes a negative value. Since $u$ is continuous
on the compact set $\overline{\mathcal{D}}$, it attains its minimum.
Because $u=g\geq0$ on $\partial\mathcal{D}$, the minimum must be
attained at an interior point $\bm{x}_*\in\mathcal{D}$. Hence
\[
\nabla u(\bm{x}_*)=0,
\qquad
\bm{H}u(\bm{x}_*)\succeq0.
\]
Since $\bm{a}(\bm{x}_*)$ is positive semidefinite,
\[
\operatorname{tr}
\left(
\bm{a}(\bm{x}_*)\bm{H}u(\bm{x}_*)
\right)
\geq0,
\]
and consequently
\[
\mathcal{L}u(\bm{x}_*)\geq0.
\]
On the other hand, \eqref{d_equation} gives
\[
\mathcal{L}u(\bm{x}_*)
=
\alpha u(\bm{x}_*)
<
0,
\]
which is a contradiction. Hence
\[
u\geq0
\qquad\text{on }\overline{\mathcal{D}}.
\]

Next, suppose that $u$ takes a value larger than $1$. Since
$u\leq1$ on $\partial\mathcal{D}$, the maximum of $u$ must then be
attained at an interior point $\bm{x}^*\in\mathcal{D}$. At such a point,
\[
\nabla u(\bm{x}^*)=0,
\qquad
\bm{H}u(\bm{x}^*)\preceq0.
\]
Therefore,
\[
\mathcal{L}u(\bm{x}^*)\leq0.
\]
However, by \eqref{d_equation},
\[
\mathcal{L}u(\bm{x}^*)
=
\alpha u(\bm{x}^*)
>
0,
\]
which is again a contradiction. Thus
\[
u\leq1
\qquad\text{on }\overline{\mathcal{D}}.
\]

It remains to establish the probabilistic representation. For
$t\geq0$, define
\[
M_t
:=
\int_0^{t\wedge\tau_{\mathcal{D}}}
e^{-\alpha s}
\nabla u\!\left(
\bm{X}_{\bm{x}}(s)
\right)^{\top}
\bm{\sigma}\!\left(
\bm{X}_{\bm{x}}(s)
\right)
\,d\bm{W}_s.
\]

Because $\mathcal{D}$ is bounded and $\bm{a}$ is uniformly elliptic on $\overline{\mathcal{D}}$, $\tau_{\mathcal{D}}<\infty$ almost surely according to Proposition 10.1 in \cite{baldi2017stochastic}.

Applying It\^o's formula to
\[
e^{-\alpha(t\wedge\tau_{\mathcal{D}})}
u\!\left(
\bm{X}_{\bm{x}}(t\wedge\tau_{\mathcal{D}})
\right)
\]
gives
\[
\begin{aligned}
&e^{-\alpha(t\wedge\tau_{\mathcal{D}})}
u\!\left(
\bm{X}_{\bm{x}}(t\wedge\tau_{\mathcal{D}})
\right)\\
&\qquad=
u(\bm{x})
+
M_t
+
\int_0^{t\wedge\tau_{\mathcal{D}}}
e^{-\alpha s}
\bigl(
\mathcal{L}u-\alpha u
\bigr)
\!\left(
\bm{X}_{\bm{x}}(s)
\right)\,ds.
\end{aligned}
\]
Since $\mathcal{L}u-\alpha u=0$ in $\mathcal{D}$,
\[
e^{-\alpha(t\wedge\tau_{\mathcal{D}})}
u\!\left(
\bm{X}_{\bm{x}}(t\wedge\tau_{\mathcal{D}})
\right)
=
u(\bm{x})+M_t.
\]

Taking expectations yields
\[
u(\bm{x})=\mathbb{E}
\left[
e^{-\alpha(t\wedge\tau_{\mathcal{D}})}
u\!\left(
\bm{X}_{\bm{x}}(t\wedge\tau_{\mathcal{D}})
\right)
\right].
\]
(Because $u\in C^{2,\gamma}(\overline{\mathcal{D}})$ and $\overline{\mathcal{D}}$ is compact, the gradient term $\nabla u$ and the diffusion coefficient $\bm{\sigma}$ are strictly bounded on the path interval $[0,t\wedge \tau_{\mathcal{D}}]$. This uniform boundedness guarantees that the stochastic integral $M_t$ is a true martingale, leading to $\mathbb{E}[M_t]=0$ by the optional stopping theorem.)

Since $0\leq u\leq1$,
\[
0
\leq
e^{-\alpha(t\wedge\tau_{\mathcal{D}})}
u\!\left(
\bm{X}_{\bm{x}}(t\wedge\tau_{\mathcal{D}})
\right)
\leq1.
\]
Moreover, as $t\to\infty$,
\[
e^{-\alpha(t\wedge\tau_{\mathcal{D}})}
u\!\left(
\bm{X}_{\bm{x}}(t\wedge\tau_{\mathcal{D}})
\right)
\longrightarrow
e^{-\alpha\tau_{\mathcal{D}}}
u\!\left(
\bm{X}_{\bm{x}}(\tau_{\mathcal{D}})
\right)
\mathbf{1}_{\{\tau_{\mathcal{D}}<\infty\}}
\]
almost surely. Hence, by dominated convergence,
\[
u(\bm{x})
=
\mathbb{E}
\left[
e^{-\alpha\tau_{\mathcal{D}}}
u\!\left(
\bm{X}_{\bm{x}}(\tau_{\mathcal{D}})
\right)
\mathbf{1}_{\{\tau_{\mathcal{D}}<\infty\}}
\right].
\]

Because the trajectories are continuous, on the event
$\{\tau_{\mathcal{D}}<\infty\}$,
\[
\bm{X}_{\bm{x}}(\tau_{\mathcal{D}})
\in\partial\mathcal{D}.
\]
Therefore,
\[
u\!\left(
\bm{X}_{\bm{x}}(\tau_{\mathcal{D}})
\right)
=
g\!\left(
\bm{X}_{\bm{x}}(\tau_{\mathcal{D}})
\right),
\]
and hence
\[
u(\bm{x})
=
\mathbb{E}
\left[
e^{-\alpha\tau_{\mathcal{D}}}
g\!\left(
\bm{X}_{\bm{x}}(\tau_{\mathcal{D}})
\right)
\mathbf{1}_{\{\tau_{\mathcal{D}}<\infty\}}
\right]
=
v_{\alpha}(\bm{x}).
\]
Thus $v_{\alpha}=u$, and consequently
\[
v_{\alpha}\in C^{2,\gamma}(\overline{\mathcal{D}})
\]
is the unique classical solution of \eqref{d_equation}. The bound
\[
0\leq v_{\alpha}\leq1
\]
follows from the maximum-principle argument above.
\end{proof}

Lemma~\ref{diri} shows that the discounted value function is twice continuously differentiable on $\overline{\mathcal{D}}$, with second-order partial derivatives that are Hölder continuous with exponent $\gamma$, and is bounded between $0$ and $1$. If the reach-avoid probability is strictly larger than $\epsilon$ for every state in the initial set $\mathcal{X}*0$, i.e.,
$\mathbb{P}*{\mathrm{RA}}(\bm{x}_0)>\epsilon$ for all $\bm{x}_0\in\mathcal{X}_0$, the following lemma shows that there exists a common discount factor $\alpha>0$ such that the discounted value function remains strictly above $\epsilon$ over the entire initial set.

\begin{lemma}
\label{lemma:stri}
Under Assumptions \ref{assump:sde_conditions}--\ref{assm2}, if $\mathbb{P}_{\mathrm{RA}}(\bm{x}_0)>\epsilon, \forall\bm{x}_0\in\mathcal{X}_0$
there exists a common $\alpha>0$ such that
\[
\mathbb{P}_{\mathrm{RA}}(\bm{x}_0)
\geq
v_{\alpha}(\bm{x}_0)
>
\epsilon,
\qquad
\forall \bm{x}_0\in\mathcal{X}_0.
\]
\end{lemma}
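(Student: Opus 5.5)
The plan has two parts: an inequality valid for every $\alpha>0$, and then a choice of a single $\alpha$ by letting $\alpha\downarrow0$ and using compactness of $\mathcal{X}_0$.

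\textbf{Step 1: the bound $\mathbb{P}_{\mathrm{RA}}\geq v_\alpha$.} Use the representation
$v_\alpha(\bm{x})=\mathbb{E}[e^{-\alpha\tau_r}\mathbf{1}_{\{\tau_r<\tau_{\mathcal{X}}\}}]$ noted before Lemma~\ref{diri}. Since $e^{-\alpha\tau_r}\leq 1$, we get $v_\alpha(\bm{x}_0)\leq\mathbb{P}(\tau_r<\tau_{\mathcal{X}})=\mathbb{P}_{\mathrm{RA}}(\bm{x}_0)$ for every $\alpha>0$ and every $\bm{x}_0\in\mathcal{D}$, in particular on $\mathcal{X}_0\subset\mathcal{D}$. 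This part needs no choice of $\alpha$.

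\textbf{Step 2: monotone convergence as $\alpha\downarrow0$.} For fixed $\bm{x}_0$, the integrand $e^{-\alpha\tau_r}\mathbf{1}_{\{\tau_r<\tau_{\mathcal{X}}\}}$ is nonincreasing in $\alpha$. As $\alpha\downarrow0$ it increases to $\mathbf{1}_{\{\tau_r<\tau_{\mathcal{X}}\}}$, because on that event $\tau_r\leq\tau_{\mathcal{D}}<\infty$ almost surely. By monotone convergence, $v_\alpha(\bm{x}_0)\uparrow\mathbb{P}_{\mathrm{RA}}(\bm{x}_0)$ pointwise on $\mathcal{D}$, and the family $\{v_\alpha\}$ is pointwise nondecreasing as $\alpha$ decreases.

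\textbf{Step 3: a common $\alpha$ on $\mathcal{X}_0$.} This is the main obstacle: we must pass from pointwise convergence to one $\alpha$ that works on all of $\mathcal{X}_0$. The approach is an open-cover argument in the spirit of Dini's theorem, using continuity of each $v_\alpha$ from Lemma~\ref{diri}. Set $U_\alpha:=\{\bm{x}\in\mathcal{X}_0: v_\alpha(\bm{x})>\epsilon\}$. Each $U_\alpha$ is relatively open in $\mathcal{X}_0$ because $v_\alpha\in C^{2,\gamma}(\overline{\mathcal{D}})$ is continuous. The sets increase as $\alpha\downarrow0$, by the monotonicity from Step 2. They cover $\mathcal{X}_0$: for each $\bm{x}_0$ we have $\mathbb{P}_{\mathrm{RA}}(\bm{x}_0)>\epsilon$, so by Step 2 some $\alpha$ gives $v_\alpha(\bm{x}_0)>\epsilon$. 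Restrict to the countable subfamily $\alpha=1/m$. Then compactness of $\mathcal{X}_0$ (Assumption~\ref{assm2}) yields a finite subcover, and nestedness reduces it to a single $U_{1/m}$, which equals $\mathcal{X}_0$. Taking $\alpha=1/m$ and combining with Step 1 gives $\mathbb{P}_{\mathrm{RA}}\geq v_\alpha>\epsilon$ on $\mathcal{X}_0$.

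This argument needs neither continuity of $\mathbb{P}_{\mathrm{RA}}$ nor uniform convergence; the strict inequality $>\epsilon$ and continuity of each $v_\alpha$ are enough.
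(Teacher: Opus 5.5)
Your proof is correct and follows essentially the same route as the paper: the bound $v_\alpha\leq\mathbb{P}_{\mathrm{RA}}$ from $e^{-\alpha\tau_r}\leq 1$, pointwise monotone convergence as $\alpha\downarrow 0$, and a compactness argument that uses continuity of each $v_\alpha$ (Lemma~\ref{diri}) together with monotonicity in $\alpha$. The only difference is cosmetic: you cover $\mathcal{X}_0$ by the nested superlevel sets $\{v_{1/m}>\epsilon\}$, whereas the paper uses a neighborhood $U_{\bm{x}_0}$ with its own $\alpha_{\bm{x}_0}$ for each point and takes the minimum of these over a finite subcover, and the two arguments are equivalent.
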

\begin{proof}
For every fixed $\bm{x}_0\in\mathcal{X}_0$,
\[
v_{\alpha}(\bm{x}_0)=
\mathbb{E}
\left[
e^{-\alpha\tau_r}
\mathbf{1}_{\{\tau_r<\tau_{\mathcal{X}}\}}
\right].
\]
As $\alpha\downarrow0$,
\[
e^{-\alpha\tau_r}
\mathbf{1}_{\{\tau_r<\tau_{\mathcal{X}}\}}
\uparrow
\mathbf{1}_{\{\tau_r<\tau_{\mathcal{X}}\}}
\qquad\text{a.s.}
\]
Hence, by the monotone convergence theorem,
\[
\lim_{\alpha\downarrow0}
v_{\alpha}(\bm{x}_0)
=
\mathbb{P}_{\mathrm{RA}}(\bm{x}_0)
>
\epsilon.
\]
Therefore, for every $\bm{x}_0\in\mathcal{X}_0$, there exists
$\alpha_{\bm{x}_0}>0$ such that
\[
v_{\alpha_{\bm{x}_0}}(\bm{x}_0)>\epsilon.
\]

By Lemma \ref{diri},
$v_{\alpha_{\bm{x}_0}}\in C(\mathcal{D})$. Hence there exists an open
neighborhood $U_{\bm{x}_0}$ of $\bm{x}_0$ such that
\[
v_{\alpha_{\bm{x}_0}}(\bm{x})>\epsilon,
\qquad
\forall \bm{x}\in U_{\bm{x}_0}.
\]
The collection
\[
\left\{
U_{\bm{x}_0} \mid \bm{x}_0\in\mathcal{X}_0
\right\}
\]
forms an open cover of the compact set $\mathcal{X}_0$. Thus, there
exist $\bm{x}_1,\ldots,\bm{x}_N\in\mathcal{X}_0$ such that
\[
\mathcal{X}_0
\subseteq
\bigcup_{i=1}^{N}U_{\bm{x}_i}.
\]
Define
\[
\alpha
:=
\min_{1\leq i\leq N}
\alpha_{\bm{x}_i}
>0.
\]
Since $v_{\alpha}(\bm{x})$ is nonincreasing in $\alpha$, we have
\[
v_{\alpha}(\bm{x})
\geq
v_{\alpha_{\bm{x}_i}}(\bm{x}),
\qquad
0<\alpha\leq\alpha_{\bm{x}_i}.
\]
Therefore,
\[
v_{\alpha}(\bm{x})
>
\epsilon,
\qquad
\forall \bm{x}\in U_{\bm{x}_i},
\quad i=1,\ldots,N.
\]
Since the sets $U_{\bm{x}_1},\ldots,U_{\bm{x}_N}$ cover
$\mathcal{X}_0$,
\[
v_{\alpha}(\bm{x}_0)>\epsilon,
\qquad
\forall \bm{x}_0\in\mathcal{X}_0.
\]

Finally,
\[
v_{\alpha}(\bm{x}_0)
=
\mathbb{E}
\left[
e^{-\alpha\tau_r}
\mathbf{1}_{\{\tau_r<\tau_{\mathcal{X}}\}}
\right]
\leq
\mathbb{P}_{\mathrm{RA}}(\bm{x}_0),
\]
and hence
\[
\mathbb{P}_{\mathrm{RA}}(\bm{x}_0)
\geq
v_{\alpha}(\bm{x}_0)
>
\epsilon,
\qquad
\forall \bm{x}_0\in\mathcal{X}_0.
\]
We complete the proof.
\end{proof}

\begin{remark}
\label{remark1}
\emph{It is observed that when $\alpha=0$, the discounted value function satisfies $v_0(\bm{x})=\mathbb{P}_{\mathrm{RA}}(\bm{x})$ for $\bm{x}\in \mathcal{D}$. It is shown in
\cite{koutsoukos2008computational} that $v_0$ is the unique viscosity solution of the partial differential equation \eqref{d_equation}. Please also refer to Theorem 9.2.14 in \cite{oksendal2013stochastic} for $v_0 \in C^2$ . However, we explicitly assume $\alpha>0$ and do not use the case $\alpha=0$. The main reason is that $\alpha=0$ does not guarantee the existence of polynomial barrier functions satisfying our barrier-like condition.} $\blacksquare$   
\end{remark}

Lemma \ref{lemma:stri} shows that there exists a common $\alpha>0$ for which the discounted value function has a strict margin above $\epsilon$ over the entire initial set. Since the discounted value function is twice continuously differentiable, it can be approximated by polynomials while simultaneously controlling the function, its gradient, and its Hessian on $\overline{\mathcal{D}}$. The following lemma, proved via Nachbin's smooth Stone--Weierstrass theorem \cite{nachbin1949algebres}, provides exactly this approximation tool.

\begin{lemma}
\label{lem:poly_approx}
Let $U\subset\mathbb{R}^n$ be open, and let
$K\subset U$ be a compact set. For every
$f\in C^2(U)$, there exists a sequence of polynomials
$p_k\in\mathbb{R}[\bm{x}]$ such that
\[
\|p_k-f\|_{C^2(K)}\longrightarrow0,
\]
where
\[
\|g\|_{C^2(K)}:=
\max\left\{
\begin{split}
&\sup_{\bm{x}\in K}|g(\bm{x})|,
\sup_{\bm{x}\in K}\|\nabla g(\bm{x})\|,\\
&\sup_{\bm{x}\in K}\|\bm{H}g(\bm{x})\|_F
\end{split}
\right\}.
\]
\end{lemma}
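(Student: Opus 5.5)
The plan is to reduce the statement to the classical Stone--Weierstrass theorem on a box containing $K$, with a cutoff function to make $f$ globally defined and an approximation of the second derivatives followed by integration. This is in the spirit of Nachbin's smooth version.

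\textbf{Step 1 (extension).} Since $K$ is compact and $U$ is open, $\delta:=\operatorname{dist}(K,\mathbb{R}^n\setminus U)>0$. Choose a smooth cutoff $\chi\in C_c^\infty(U)$ with $\chi\equiv1$ on a neighborhood of $K$, obtained by mollifying the indicator of the $\delta/2$-neighborhood of $K$. Then $\tilde f:=\chi f$, extended by zero outside $U$, belongs to $C^2_c(\mathbb{R}^n)$ and coincides with $f$, together with its first and second derivatives, on $K$. It therefore suffices to approximate $\tilde f$ in $C^2$ on a closed box $B\supset K$.

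\textbf{Step 2 (approximation).} Mollify $\tilde f$ to get $f_\eta=\tilde f*\rho_\eta$. Because the second derivatives of $\tilde f$ are uniformly continuous, $\|f_\eta-\tilde f\|_{C^2(B)}\to0$. Next approximate the smooth function $f_\eta$ on $B$ in $C^2$. One route is to convolve with the Weierstrass/Landau polynomial kernels, e.g.\ $c_k(1-\|\bm x\|^2/R^2)^k$ for $R$ large. Convolution commutes with differentiation, so every derivative of order at most $2$ of $f_\eta*P_k$ equals $(\partial^\beta f_\eta)*P_k$. Each of these converges uniformly on $B$ to $\partial^\beta f_\eta$ by the standard approximate-identity argument, since $\partial^\beta f_\eta$ is continuous with compact support.

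An alternative route, if one prefers to invoke Stone--Weierstrass directly, is as follows. Approximate the top-order derivative $\partial_1^2\cdots\partial_n^2 f_\eta$ (order $2n$) uniformly on $B$ by a polynomial. Then integrate $2n$ times from a corner of $B$, adding back the boundary terms, each of which is a lower-dimensional function handled by induction. The convolution route avoids this bookkeeping, so I would use it.

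\textbf{Step 3 (conclusion).} A diagonal choice $\eta\to0$, $k\to\infty$ yields polynomials $p_k$ with $\|p_k-f\|_{C^2(K)}\to0$. Here the Hessian is controlled in Frobenius norm because each entry converges uniformly, and the gradient because each component does.

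The only delicate point is Step 2. One must verify that $f_\eta*P_k$ is a polynomial, which holds because $P_k$ is a polynomial and $f_\eta$ has compact support. One must also check that the kernel concentrates uniformly on $B-\operatorname{supp}f_\eta$, which requires taking $R$ larger than the diameter of $B\cup\operatorname{supp}\tilde f$.
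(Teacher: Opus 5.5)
Your argument is correct, but it takes a different route from the paper. The paper's proof is a direct citation: it regards $\{p|_U : p\in\mathbb{R}[\bm x]\}$ as a subalgebra of $C^2(U)$ and verifies the three hypotheses of Nachbin's smooth Stone--Weierstrass theorem. The coordinate functions separate points and tangent directions, and the constant $1$ vanishes nowhere. From this it concludes density in the compact-open $C^2$ topology on $U$ and then restricts to $K$.

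You instead prove the needed special case by hand. A cutoff $\chi$ produces $\tilde f\in C_c^2(\mathbb{R}^n)$ agreeing with $f$ near $K$. You then convolve with Landau polynomial kernels, using that $\partial^\beta(\tilde f*P_k)=(\partial^\beta\tilde f)*P_k$ for $|\beta|\le 2$, so each derivative converges uniformly on $B$ by the approximate-identity argument.

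What each buys:
\begin{itemize}
\item The paper's route is shorter. It absorbs the fact that $f$ lives only on $U$ into the cited theorem, at the cost of relying on a substantial external result.
\item Your route is self-contained and constructive, and it works verbatim for any $C^m$.
\item Yours also makes the mechanism explicit (a polynomial approximate identity), which in principle gives explicit degree bounds in terms of the modulus of continuity of the second derivatives of $\tilde f$.
\end{itemize}

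Two small remarks on your write-up:
\begin{itemize}
\item The mollification step is unnecessary. Since $\tilde f\in C_c^2$, you can move up to two derivatives from $P_k$ onto $\tilde f$ by integrating by parts, which also removes the need for the diagonal argument.
\item If you keep the mollification, $R$ must exceed the diameter of $B\cup\operatorname{supp} f_\eta$, not just $B\cup\operatorname{supp}\tilde f$. This is harmless if you restrict to $\eta\le 1$ and enlarge $R$ accordingly.
\end{itemize}
The alternative route of integrating the top-order mixed derivative is not needed once the convolution argument is in place.
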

\begin{proof}
Let
\[
\mathcal{P}:=
\{p|_U \mid p\in\mathbb{R}[\bm{x}]\}
\subset C^2(U)
\]
be the algebra of polynomial functions restricted to $U$.
We verify the three conditions of Nachbin's smooth
Stone--Weierstrass theorem \cite{nachbin1949algebres}.

First, $\mathcal{P}$ separates points of $U$. Indeed, if
$\bm{x},\bm{y}\in U$ with $\bm{x}\neq\bm{y}$, then there exists
$i\in\{1,\ldots,n\}$ such that $x_i\neq y_i$. The coordinate function
\[
p(\bm{z})=z_i
\]
belongs to $\mathcal{P}$ and satisfies
\[
p(\bm{x})\neq p(\bm{y}).
\]

Second, $\mathcal{P}$ is nowhere vanishing, since the constant
polynomial
\[
p(\bm{z})\equiv1
\]
belongs to $\mathcal{P}$ and satisfies
\[
p(\bm{x})\neq0,
\qquad
\forall\bm{x}\in U.
\]

Third, $\mathcal{P}$ separates tangent directions. Let
$\bm{x}\in U$ and let $\bm{v}\in\mathbb{R}^n\setminus\{\bm{0}\}$.
Choose $i\in\{1,\ldots,n\}$ such that $v_i\neq0$. For the coordinate
function
\[
p(\bm{z})=z_i,
\]
we have
\[
D_{\bm{v}}p(\bm{x})
=
\nabla p(\bm{x})^\top\bm{v}
=
v_i
\neq0.
\]
Thus, the conditions of Nachbin's theorem are satisfied. Consequently, $\mathcal{P}$ is dense in $C^2(U)$ with respect to the compact-open $C^2$ topology.

Hence, for the given $f\in C^2(U)$, there exists a sequence
$p_k\in\mathbb{R}[\bm{x}]$ such that, for every compact set
$L\subset U$,
\[
\max_{|\alpha|\leq2}
\sup_{\bm{x}\in L}
\left|
D^\alpha(p_k-f)(\bm{x})
\right|
\longrightarrow0.
\]
Since $K\subset U$, taking $L=K$ gives
\[
\begin{cases}
\sup_{\bm{x}\in K} |p_k(\bm{x})-f(\bm{x})| \longrightarrow 0,\\
\sup_{\bm{x}\in K} \|\nabla p_k(\bm{x})-\nabla f(\bm{x})\| \longrightarrow 0,\\
\sup_{\bm{x}\in K}
\|\bm{H}p_k(\bm{x})-\bm{H}f(\bm{x})\|_F\longrightarrow 0.
\end{cases}
\]
Therefore,
\[
\|p_k-f\|_{C^2(K)}
\longrightarrow0.
\]
We finish the proof.
\end{proof}

We are now ready to state and prove the main result of this section: the barrier-like condition in \cite{xue2025refined} is not only sufficient for infinite-horizon reach-avoid verification, but also necessary. Specially, when the reach-avoid probability is strictly larger than $\epsilon$ for every $\bm{x}_0\in \mathcal{X}_0$, there exists a polynomial barrier function satisfying this barrier-like condition. The proof for this conclusion combines the equation \eqref{d_equation}
 Lemma \ref{diri}, the uniform strict margin of
Lemma \ref{lemma:stri}, and the polynomial approximation of
Lemma \ref{lem:poly_approx}. 

\begin{theorem}
\label{thm:poly_sbc}
Under Assumptions \ref{assump:sde_conditions} and \ref{assm2}, if there exist
$v\in\mathbb{R}[\bm{x}]$, $\alpha>0$, and $\beta\geq0$ satisfying
\begin{equation}
\label{barrier2}
\begin{cases}
v(\bm{x})\geq
\epsilon+\dfrac{\beta}{\alpha}(\epsilon-1),
&\forall \bm{x}\in\mathcal{X}_0,\\[2mm]
\mathcal{L}v(\bm{x})-\alpha v(\bm{x})\geq\beta,
&\forall \bm{x}\in\overline{\mathcal{D}},\\[2mm]
v(\bm{x})\leq1,
&\forall \bm{x}\in\partial\mathcal{X}_r,\\[2mm]
\alpha v(\bm{x})+\beta\leq0,
&\forall \bm{x}\in\partial\mathcal{X},
\end{cases}
\end{equation}
then
\[
\mathbb{P}_{\mathrm{RA}}(\bm{x}_0)\geq\epsilon,
\qquad
\forall \bm{x}_0\in\mathcal{X}_0.
\]
Moreover, if the reach-avoid probability is strictly larger than $\epsilon$ for every state in the initial set $\mathcal{X}_0$, i.e., $\mathbb{P}_{\mathrm{RA}}(\bm{x}_0)>\epsilon, \forall\bm{x}_0\in\mathcal{X}_0$, there exist $v\in\mathbb{R}[\bm{x}]$, $\alpha>0$, and $\beta\geq0$ such that all four inequalities in \eqref{barrier2} hold strictly.
\end{theorem}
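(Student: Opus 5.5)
Sufficiency and necessity are handled separately. For \emph{sufficiency}, the plan is a comparison (supermartingale) argument via Dynkin's formula applied to $e^{-\alpha t}v$. Fix $\bm{x}_0\in\mathcal{X}_0$. By Proposition~10.1 of \cite{baldi2017stochastic} (as in the proof of Lemma~\ref{diri}), $\tau_{\mathcal{D}}<\infty$ a.s. Apply It\^o's formula to $e^{-\alpha(t\wedge\tau_{\mathcal{D}})}v(\bm{X}_{\bm{x}_0}(t\wedge\tau_{\mathcal{D}}))$. Since $v$ is a polynomial and the process stays in the compact set $\overline{\mathcal{D}}$ up to $\tau_{\mathcal{D}}$, the stochastic integral is a true martingale. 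Using $\mathcal{L}v-\alpha v\geq\beta$ on $\overline{\mathcal{D}}$, this gives
\[
\mathbb{E}\bigl[e^{-\alpha(t\wedge\tau_{\mathcal{D}})}v(\bm{X}(t\wedge\tau_{\mathcal{D}}))\bigr]\geq v(\bm{x}_0)+\beta\,\mathbb{E}\Bigl[\int_0^{t\wedge\tau_{\mathcal{D}}}e^{-\alpha s}ds\Bigr].
\]
Let $t\to\infty$ by dominated convergence, since $v$ is bounded on $\overline{\mathcal{D}}$. With $\tau:=\tau_{\mathcal{D}}$, this yields $\mathbb{E}[e^{-\alpha\tau}v(\bm{X}(\tau))]\geq v(\bm{x}_0)+\frac{\beta}{\alpha}\mathbb{E}[1-e^{-\alpha\tau}]$.

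Next split the left side according to the exit location: $\bm{X}(\tau)\in\partial\mathcal{X}_r$ on $\{\tau_r<\tau_{\mathcal{X}}\}$, and $\bm{X}(\tau)\in\partial\mathcal{X}$ otherwise. Use $v\leq 1$ on $\partial\mathcal{X}_r$ and $v\leq -\beta/\alpha$ on $\partial\mathcal{X}$, then move the $\beta$-terms together. Writing $A=\{\tau_r<\tau_{\mathcal{X}}\}$ and $E_A=\mathbb{E}[e^{-\alpha\tau}\mathbf{1}_A]$, this gives
\[
v(\bm{x}_0)+\frac{\beta}{\alpha}\leq E_A\Bigl(1+\frac{\beta}{\alpha}\Bigr)\leq\Bigl(1+\frac{\beta}{\alpha}\Bigr)\mathbb{P}_{\mathrm{RA}}(\bm{x}_0).
\]
The first inequality in \eqref{barrier2} says exactly $v(\bm{x}_0)+\beta/\alpha\geq\epsilon(1+\beta/\alpha)$. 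Dividing by $1+\beta/\alpha>0$ gives $\mathbb{P}_{\mathrm{RA}}(\bm{x}_0)\geq\epsilon$. This bookkeeping is routine.

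For \emph{necessity}, first use Lemma~\ref{lemma:stri} to pick $\alpha>0$ with $v_\alpha>\epsilon$ on $\mathcal{X}_0$. By compactness of $\mathcal{X}_0$ and continuity of $v_\alpha$, there is a margin $\delta>0$ with $v_\alpha\geq\epsilon+2\delta$ on $\mathcal{X}_0$. The exact solution $v_\alpha$ only satisfies the remaining conditions with equality ($\mathcal{L}v_\alpha-\alpha v_\alpha=0$, $v_\alpha=1$ on $\partial\mathcal{X}_r$, $v_\alpha=0$ on $\partial\mathcal{X}$), so it must be perturbed to make them strict. Choose $\beta=0$ and set $w:=v_\alpha-\eta$ for a small constant $\eta>0$. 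Then
\[
\mathcal{L}w-\alpha w=\alpha\eta>0,\qquad w=1-\eta<1 \text{ on }\partial\mathcal{X}_r,\qquad \alpha w=-\alpha\eta<0 \text{ on }\partial\mathcal{X}.
\]
Moreover $w\geq\epsilon+2\delta-\eta>\epsilon$ on $\mathcal{X}_0$ once $\eta<\delta$. Thus $w$ satisfies all four inequalities strictly, with uniform margin $\min\{\alpha\eta,\eta,\delta\}$.

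Finally, replace $w$ by a polynomial using Lemma~\ref{lem:poly_approx}. The main obstacle is that Lemma~\ref{lem:poly_approx} needs $f\in C^2(U)$ on an open $U\supset\overline{\mathcal{D}}$, whereas Lemma~\ref{diri} only gives $v_\alpha\in C^{2,\gamma}(\overline{\mathcal{D}})$. I would resolve this with a $C^2$ extension, for instance a Whitney or Seeley-type extension using the $C^{2,\gamma}$ boundary, to an open neighbourhood $U$ of $\overline{\mathcal{D}}$; note that all conditions are only needed on $\overline{\mathcal{D}}$. Lemma~\ref{lem:poly_approx} with $K=\overline{\mathcal{D}}$ then gives polynomials $p_k\to w$ in $C^2(\overline{\mathcal{D}})$. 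Since $\bm{b}$ and $\bm{a}$ are bounded on $\overline{\mathcal{D}}$, $\mathcal{L}p_k-\alpha p_k\to\mathcal{L}w-\alpha w$ uniformly there. Hence for $k$ large the uniform strict margins persist, and $v=p_k$, together with the chosen $\alpha$ and $\beta=0$, satisfies \eqref{barrier2} strictly.
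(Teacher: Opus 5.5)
Your proof is correct. The necessity half follows the paper's argument, but the sufficiency half takes a different route.

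\textbf{Necessity.} The paper perturbs $v_\alpha$ to $q=(1+\beta/\alpha)v_\alpha-\beta/\alpha-\delta$ for an arbitrary $\beta\geq0$; your $w=v_\alpha-\eta$ is the case $\beta=0$, which suffices for the existence claim. The paper then extends $q$ across $\partial\mathcal{D}$ using the H\"older extension lemma of Gilbarg--Trudinger on each component, gluing on disjoint neighbourhoods; this is where the pairwise-disjoint closures are used. After that it applies the Nachbin approximation and controls $\mathcal{L}p_k-\alpha p_k$ uniformly, exactly as you do. One point you leave implicit and the paper states: $\mathcal{L}w-\alpha w=\alpha\eta$ comes from the PDE only on $\mathcal{D}$, and it passes to $\overline{\mathcal{D}}$ by continuity of the derivatives up to the boundary.

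\textbf{Sufficiency.} The paper runs no martingale argument here. It cites Corollary~1 of \cite{xue2025refined}, a finite-horizon lower bound, and lets $T\to\infty$ to get $\mathbb{P}_{\mathrm{RA}}(\bm{x}_0)\geq(\alpha v(\bm{x}_0)+\beta)/(\alpha+\beta)$. Your direct It\^o/optional-stopping computation on $e^{-\alpha t}v$, stopped at $\tau_{\mathcal{D}}$, reaches the same bound in a self-contained way, and it gives more.

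\emph{What your route buys.} It proves the sharper intermediate inequality $v_\alpha(\bm{x}_0)=\mathbb{E}[e^{-\alpha\tau_r}\mathbf{1}_{\{\tau_r<\tau_{\mathcal{X}}\}}]\geq(\alpha v(\bm{x}_0)+\beta)/(\alpha+\beta)$. So a barrier with parameter $\alpha$ certifies exactly the discounted value function, and the necessity construction from $v_\alpha$ then reads as a genuine converse. It also shows that sufficiency needs neither ellipticity nor boundary regularity. Since $\alpha>0$ and $v$ is bounded on $\overline{\mathcal{D}}$, the term $e^{-\alpha t}v(\bm{X}(t))$ vanishes on $\{\tau_{\mathcal{D}}=\infty\}$, so you do not need $\tau_{\mathcal{D}}<\infty$ a.s. at all.

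\emph{What the paper's route buys.} Citing the finite-horizon corollary ties the infinite-horizon condition to the existing finite-horizon theory, which is the paper's stated point: the same barrier-like condition certifies both.
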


\begin{proof}
\noindent{\textbf{1. We first prove sufficiency.}}

By Corollary~1 in \cite{xue2025refined},
for every $\bm{x}_0\in\mathcal{X}_0$,
\[
\mathbb{P}_{\mathrm{RA}}(\bm{x}_0)
\geq
\lim_{T\to\infty}
\frac{
e^{\alpha T}
\bigl(
\alpha v(\bm{x}_0)+\beta
\bigr)
-(\alpha+\beta)
}{
(\alpha+\beta)(e^{\alpha T}-1)
}
=
\frac{
\alpha v(\bm{x}_0)+\beta
}{
\alpha+\beta
}.
\]
By the first condition in \eqref{barrier2},
\[
v(\bm{x}_0)
\geq
\epsilon+\frac{\beta}{\alpha}(\epsilon-1),
\]
and hence
\[
\alpha v(\bm{x}_0)+\beta
\geq
\alpha\epsilon+\beta(\epsilon-1)+\beta
=
\epsilon(\alpha+\beta).
\]
Therefore,
\[
\mathbb{P}_{\mathrm{RA}}(\bm{x}_0)\geq\epsilon,
\qquad
\forall \bm{x}_0\in\mathcal{X}_0.
\]

\noindent\textbf{2. We next prove necessity.}

By Lemma~\ref{lemma:stri}, there exists a
common $\alpha>0$ such that
\[
v_{\alpha}(\bm{x}_0)>\epsilon,
\qquad
\forall \bm{x}_0\in\mathcal{X}_0.
\]
By Lemma \ref{diri},
\[
v_{\alpha}\in C^{2,\gamma}(\overline{\mathcal{D}})
\]
and
\[
\mathcal{L}v_{\alpha}-\alpha v_{\alpha}=0
\qquad\text{in }\mathcal{D},
\]
with $v_{\alpha}=1
\quad\text{on }\partial\mathcal{X}_r,
\qquad
v_{\alpha}=0
\quad\text{on }\partial\mathcal{X}$.

Since $\mathcal{X}_0$ is compact and $v_{\alpha}$ is continuous, there
exists $\eta>0$ such that
\[
v_{\alpha}(\bm{x})
\geq
\epsilon+\eta,
\qquad
\forall\bm{x}\in\mathcal{X}_0.
\]

Fix an arbitrary $\beta\geq0$. Choose
\[
0<\delta
<
\left(
1+\frac{\beta}{\alpha}
\right)\eta
\]
and define
\[
q(\bm{x})
:=
\left(
1+\frac{\beta}{\alpha}
\right)
v_{\alpha}(\bm{x})
-
\frac{\beta}{\alpha}
-
\delta.
\]
Then
\[
q\in C^{2,\gamma}(\overline{\mathcal{D}}).
\]
Since
\[
\mathcal{L}v_{\alpha}-\alpha v_{\alpha}=0,
\]
we have
\[
\mathcal{L}q-\alpha q
=
\beta+\alpha\delta
>
\beta,
\qquad
\forall\bm{x}\in\mathcal{D}.
\]
Because $\mathcal{L}q-\alpha q$ is continuous on
$\overline{\mathcal{D}}$, the inequality extends to the closure:
\[
\mathcal{L}q-\alpha q
=
\beta+\alpha\delta
>
\beta,
\qquad
\forall\bm{x}\in\overline{\mathcal{D}}.
\]
Furthermore, on $\partial\mathcal{X}_r$,
\[
q
=
1-\delta
<
1,
\]
while on $\partial\mathcal{X}$,
\[
\alpha q+\beta
=
-\alpha\delta
<
0.
\]
Finally, for every $\bm{x}\in\mathcal{X}_0$,
\[
\begin{aligned}
q(\bm{x})
&\geq
\left(
1+\frac{\beta}{\alpha}
\right)
(\epsilon+\eta)
-
\frac{\beta}{\alpha}
-
\delta\\
&=
\epsilon
+
\frac{\beta}{\alpha}(\epsilon-1)
+
\left(
1+\frac{\beta}{\alpha}
\right)\eta
-
\delta\\
&>
\epsilon
+
\frac{\beta}{\alpha}(\epsilon-1).
\end{aligned}
\]
Thus all four inequalities in \eqref{barrier2} hold strictly for $q$
on the corresponding sets.

It remains to construct a polynomial satisfying the same strict
inequalities. 

By the extension theorem for Hölder spaces, e.g., Lemma 6.37 in
\cite{gilbarg1977elliptic}, for each $\ell$ there exist an open
neighborhood $U_\ell$ of $\overline{\mathcal{D}_\ell}$ and an extension
\[
q_\ell\in C^{2,\gamma}(U_\ell)
\]
such that $q_\ell=q \text{ on }\overline{\mathcal{D}_\ell}$.

Under Assumption \ref{assm2}, we have that the closures $\overline{\mathcal{D}_\ell}$ are pairwise disjoint
compact sets. Thus, the neighborhoods $U_\ell$ can be chosen to be pairwise disjoint.
Hence, with
\[
U:=\bigcup_{\ell=1}^N U_\ell,
\]
the function
\[
\widetilde q(\bm{x})
:=
q_\ell(\bm{x}),
\qquad
\bm{x}\in U_\ell,
\]
is well defined and belongs to $C^{2,\gamma}(U)$. Moreover,
\[
\widetilde q=q
\qquad\text{on }\overline{\mathcal{D}}.
\]
For simplicity, we henceforth denote this extension again by $q$.

Choose a compact set $K$ such that
\[
\overline{\mathcal{D}}\subset K\subset U.
\]
By Lemma~\ref{lem:poly_approx}, there exists a sequence of polynomials
$p_k\in\mathbb{R}[\bm{x}]$ such that
\[
\|p_k-q\|_{C^2(K)} \longrightarrow 0,
\]
where \[
\|f\|_{C^2(K)}:=
\max\left\{
\begin{split}
&\sup_{\bm{x}\in K}|f(\bm{x})|,
\sup_{\bm{x}\in K}\|\nabla f(\bm{x})\|,\\
&\sup_{\bm{x}\in K}\|\bm{H}f(\bm{x})\|_F
\end{split}
\right\}.
\]
In particular, $\|p_k-q\|_{C^2(\overline{\mathcal{D}})}
\longrightarrow 0$.

Let
\[
B:=\sup_{\bm{x}\in\overline{\mathcal{D}}}
\|\bm{b}(\bm{x})\|,
\qquad
A:=\sup_{\bm{x}\in\overline{\mathcal{D}}}
\|\bm{a}(\bm{x})\|_F.
\]
Since $\bm{b}$ and $\bm{a}$ are continuous on the compact set
$\overline{\mathcal{D}}$, we have $A,B<\infty$. For every $k$,
\[
\begin{aligned}
&\|\mathcal{L}p_k-\mathcal{L}q\|_{C(\overline{\mathcal{D}})}\\
\leq &
B\|\nabla p_k-\nabla q\|_{C(\overline{\mathcal{D}})}+\frac{A}{2}
\|\bm{H}p_k-\bm{H}q\|_{C(\overline{\mathcal{D}}),F}
\\
\leq&
\left(B+\frac{A}{2}\right)
\|p_k-q\|_{C^2(\overline{\mathcal{D}})}.
\end{aligned}
\]
Hence,
\[
\|\mathcal{L}p_k-\mathcal{L}q\|_{C(\overline{\mathcal{D}})}
\longrightarrow0.
\]

Because the inequalities satisfied by $q$ are strict and the sets
$\mathcal{X}_0$, $\partial\mathcal{X}_r$, and $\partial\mathcal{X}$
are compact, there exists $\rho>0$ such that
\[
q(\bm{x})
\geq
\epsilon+\frac{\beta}{\alpha}(\epsilon-1)+\rho,
\qquad
\forall\,\bm{x}\in\mathcal{X}_0,
\]
\[
q(\bm{x})\leq 1-\rho,
\qquad
\forall\,\bm{x}\in\partial\mathcal{X}_r,
\]
and
\[
\alpha q(\bm{x})+\beta\leq-\rho,
\qquad
\forall\,\bm{x}\in\partial\mathcal{X}.
\]
Moreover, since
\[
\mathcal{L}q-\alpha q=\beta+\alpha\delta
\qquad\text{on }\overline{\mathcal{D}},
\]
we may, by decreasing $\rho$ if necessary, also assume that
\[
\mathcal{L}q-\alpha q
\geq\beta+\rho,
\qquad
\forall\,\bm{x}\in\overline{\mathcal{D}}.
\]

Choose $k$ sufficiently large such that
\[
\|p_k-q\|_{C^2(\overline{\mathcal{D}})}
<
\min\left\{
\frac{\rho}{2},
\frac{\rho}
{2\left(B+\frac{A}{2}+\alpha\right)}
\right\}.
\]
Then, for every $\bm{x}\in\mathcal{X}_0$,
\[
p_k(\bm{x})
\geq
q(\bm{x})-\frac{\rho}{2}
>
\epsilon+\frac{\beta}{\alpha}(\epsilon-1),
\]
and, for every $\bm{x}\in\partial\mathcal{X}_r$,
\[
p_k(\bm{x})
\leq
q(\bm{x})+\frac{\rho}{2}
<1.
\]
Similarly, for every $\bm{x}\in\partial\mathcal{X}$,
\[
\alpha p_k(\bm{x})+\beta
\leq
\alpha q(\bm{x})+\beta
+\alpha\|p_k-q\|_{C(\overline{\mathcal{D}})}
<0.
\]

Finally, for every $\bm{x}\in\overline{\mathcal{D}}$,
\[
\begin{aligned}
&\mathcal{L}p_k(\bm{x})-\alpha p_k(\bm{x})\\
\geq &
\mathcal{L}q(\bm{x})-\alpha q(\bm{x})-
\left\{
\|\mathcal{L}p_k-\mathcal{L}q\|_{C(\overline{\mathcal{D}})}
+
\alpha\|p_k-q\|_{C(\overline{\mathcal{D}})}
\right\}
\\
>&
\beta.
\end{aligned}
\]
Therefore, setting
\[
v:=p_k\in\mathbb{R}[\bm{x}],
\]
we obtain a polynomial satisfying all four inequalities in
\eqref{barrier2} strictly. This completes the proof.
\end{proof}

\section{Semidefinite Programming Computation}
\label{sec:sdp}

In this section, we present the synthesis of polynomial barrier functions using SOS programming for polynomial systems, i.e., systems in which each component of the drift vector and each entry of the diffusion matrix is a polynomial, and the initial set, the safe set, and the target set are semialgebraic. We replace polynomial positivity conditions with membership in the corresponding Archimedean quadratic modules to formulate the SOS program. Putinar's Positivstellensatz guarantees that strictly positive polynomials admit such representations. This yields a complete degree hierarchy for polynomial barrier functions.

\subsection{Algebraic Representations}

\begin{assumption}
\label{ass:poly}
\begin{enumerate}
\item We assume that the four state-space regions appearing in the barrier-like condition \eqref{barrier2} admit compact basic semialgebraic representations:
\[
\begin{cases}
\mathcal{X}_0=
\left\{
\bm{x}\in\mathbb{R}^n
\mid
g_{0,i}(\bm{x})\geq0,\;
i=1,\ldots,k_0
\right\},\\
\overline{\mathcal{X}\setminus\mathcal{X}_r}=
\left\{
\bm{x}\in\mathbb{R}^n
\mid
g_{u,i}(\bm{x})\geq0,\;
i=1,\ldots,k_u
\right\},\\
\partial \mathcal{X}_r=
\left\{
\bm{x}\in\mathbb{R}^n
\mid
g_{r,i}(\bm{x})\geq0,\;
i=1,\ldots,k_r
\right\},\\
\partial \mathcal{X}=
\left\{
\bm{x}\in\mathbb{R}^n
\mid
g_{e,i}(\bm{x})\geq0,\;
i=1,\ldots,k_e
\right\},
\end{cases}
\]
where $g_{0,i}, g_{u,i}, g_{r,i}, g_{e,i} \in\mathbb{R}[\bm{x}]$ 
are known polynomials.
\item the components of $\bm{b}$ and the entries of $\bm{\sigma}$ are polynomial in $\bm{x}$.
\end{enumerate}
\end{assumption}

We next introduce the SOS framework used to certify positivity
conditions. Let
\[
\Sigma[\bm{x}]:=
\left\{
q\in\mathbb{R}[\bm{x}]
\mid
q(\bm{x})=\sum_j h_j^2(\bm{x}),
\quad
h_j\in\mathbb{R}[\bm{x}]
\right\}
\]
denote the cone of SOS polynomials. For a collection
of defining polynomials
\[
G=\{g_1,\ldots,g_k\},
\]
the associated quadratic module is
\[
\mathcal{M}(G):=
\left\{
\sigma_0(\bm{x})+ \sum_{i=1}^{k}
\sigma_i(\bm{x})g_i(\bm{x})
\;\middle|\;
\sigma_0,\sigma_1,\ldots,\sigma_k\in\Sigma[\bm{x}]
\right\}.
\]

To apply Putinar's Positivstellensatz, we ensure that each of the
quadratic modules associated with the four regions is Archimedean, i.e., there exists $R>0$ such that
\[R-\|\bm{x}\|_2^2 \in \mathcal{M}(G).\]
Because $\overline{\mathcal{X}}$ is compact, there definitely exist $R>0$ such that
\[
\overline{\mathcal{X}}
\subseteq
\left\{
\bm{x}\in\mathbb{R}^n
\mid
R-\|\bm{x}\|_2^2\geq0
\right\}.
\]
Define the redundant ball constraint
\[
g_{\mathrm b}(\bm{x}):=R-\|\bm{x}\|_2^2.
\]
Since each of the four regions is contained in
$\overline{\mathcal{X}}$, this constraint does not change any of the
sets. We therefore include $g_{\mathrm b}$ among the defining
polynomials of each region. 
For example, let
\[
G_0=\{g_{0,1},\ldots,g_{0,k_0},g_{\mathrm b}\},
\qquad
G_u =\{g_{u,1},\ldots,g_{u,k_u},g_{\mathrm b}\},
\]
and similarly define
\[
G_r=\{g_{r,1},\ldots,g_{r,k_r},g_{\mathrm b}\},
\qquad
G_e=\{g_{e,1},\ldots,g_{e,k_e},g_{\mathrm b}\}.
\]
The associated quadratic modules
$\mathcal{M}(G_0)$, $\mathcal{M}(G_u)$,
$\mathcal{M}(G_r)$, and $\mathcal{M}(G_e)$ are therefore
Archimedean.

The following form of Putinar's Positivstellensatz
\cite{putinar1993positive} provides the basis for our SOS
formulation.

\begin{theorem}[Putinar's Positivstellensatz \cite{putinar1993positive}]
\label{thm:putinar}
Let
\[
G=
\left\{
\bm{x}\in\mathbb{R}^n
\;\middle|\;
g_i(\bm{x})\geq0,\;
i=1,\ldots,k
\right\}
\]
be a basic closed semialgebraic set. Suppose that $\mathcal{M}(G)$ is Archimedean. Then, for every polynomial
$p\in\mathbb{R}[\bm{x}]$ satisfying $p(\bm{x})>0, \forall\,\bm{x}\in G$, there exist SOS polynomials
$\sigma_0,\sigma_1,\ldots,\sigma_k\in\Sigma[\bm{x}]$ such that
\[
p(\bm{x})=\sigma_0(\bm{x})+\sum_{i=1}^{k} \sigma_i(\bm{x})g_i(\bm{x}).
\]
Equivalently, $p\in\mathcal{M}(G)$.
\end{theorem}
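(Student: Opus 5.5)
The statement is Putinar's classical Positivstellensatz, which the paper only cites. If I had to reprove it, I would use a separation argument combined with an operator-theoretic (GNS) representation of the separating functional. Throughout, write $M:=\mathcal{M}(G)$ and $K:=\{\bm x\mid g_i(\bm x)\geq 0,\ i=1,\ldots,k\}$.

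\textbf{Step 1: every polynomial is bounded by $M$.} First I would show that Archimedeanity gives: for every $f\in\mathbb{R}[\bm x]$ there is $N>0$ with $N\pm f\in M$.
\begin{itemize}
\item The set $H:=\{f\mid \exists N,\ N\pm f\in M\}$ is closed under sums and contains the constants.
\item $H$ contains each coordinate $x_i$. This follows from $R-\|\bm x\|_2^2\in M$ together with the identity $\tfrac{R+1}{2}\pm x_i=\tfrac12(x_i\pm1)^2+\tfrac12(R-x_i^2)$. Here $R-x_i^2\in M$ because $R-x_i^2$ equals $R-\|\bm x\|_2^2$ plus the SOS $\sum_{j\neq i}x_j^2$.
\item $H$ is closed under products. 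This uses $M+M\subseteq M$, $\Sigma[\bm x]\cdot M\subseteq M$, and an identity of the type $N_1N_2\mp fg=\tfrac12\bigl[(N_1\mp f)(N_2+g)+(N_1\pm f)(N_2-g)\bigr]$. To apply it, I would first reduce to the case where $N_1\pm f$ is SOS plus an element of $M$, for example by working with squares such as $N^2-f^2=(N-f)^2+2f(N-f)$-type manipulations.
\end{itemize}
Hence $H=\mathbb{R}[\bm x]$. In particular, the constant $1$ is an algebraic interior point of the convex cone $M$.

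\textbf{Step 2: separation.} Suppose $p>0$ on $K$ but $p\notin M$. Because $M$ is a convex cone with algebraic interior point $1$, the Eidelheit--Kakutani separation theorem gives a linear functional $L:\mathbb{R}[\bm x]\to\mathbb{R}$ with $L\geq0$ on $M$, $L(1)=1$ and $L(p)\leq0$. No topology is needed here, only the algebraic interior point.

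\textbf{Step 3: represent $L$ by a measure on $K$.} This is the main obstacle: we must show $L(f)=\int f\,d\mu$ for a probability measure $\mu$ supported in $K$. I would proceed by a GNS construction.
\begin{itemize}
\item Define $\langle f,h\rangle:=L(fh)$. This is positive semidefinite because $\Sigma[\bm x]\subseteq M$.
\item Quotient by the null space and complete to obtain a Hilbert space $\mathcal H$.
\item Multiplication by $x_i$ is a well-defined, symmetric, bounded operator $T_i$. Boundedness comes from $f^2(R-x_i^2)\in M$, which gives $\|x_if\|^2\leq R\|f\|^2$. Well-definedness on the quotient follows from Cauchy--Schwarz.
\item The $T_i$ are commuting bounded self-adjoint operators, so they have a joint spectral measure $E$ on $\mathbb{R}^n$. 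Set $\mu(\cdot):=\langle E(\cdot)1,1\rangle$; then $L(f)=\langle f(T)1,1\rangle=\int f\,d\mu$.
\item For the support: $\langle g_i(T)h,h\rangle=L(g_ih^2)\geq0$ for every polynomial $h$, and polynomials are dense in $\mathcal H$. Hence $g_i(T)\succeq0$, so the joint spectrum lies in $\{g_i\geq0\}$ for every $i$, giving $\operatorname{supp}\mu\subseteq K$.
\end{itemize}

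\textbf{Step 4: conclude.} Since $p>0$ on the compact set $K$ and $\mu$ is a probability measure on $K$, we get $L(p)=\int_K p\,d\mu\geq\min_K p>0$. This contradicts $L(p)\leq0$, so $p\in\mathcal{M}(G)$.

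The only delicate points are the product-closure in Step~1 and the spectral-support argument in Step~3. An alternative for Step~3 would be Haviland's theorem plus Schmüdgen-type reasoning, but the GNS route avoids needing the preordering.
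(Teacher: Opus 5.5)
The paper gives no proof of this statement; it only cites Putinar's 1993 paper. Your outline is essentially the classical functional-analytic proof: algebraic separation using $1$ as a core point of your $M=\mathcal{M}(G)$, then a GNS construction whose multiplication operators are bounded because $f^2(R-x_i^2)\in M$, then a joint spectral measure supported on $\{g_i\geq 0\}$. Steps~2--4 are sound as written. Compactness of $K$ follows from $R-\|\bm x\|_2^2\geq 0$ on $K$, and if $K=\emptyset$ the contradiction is already $\mu(\mathbb{R}^n)=1$ with $\mathrm{supp}\,\mu\subseteq K$.

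There is, however, a genuine gap in Step~1: the closure of $H$ under products. You need it to reach all monomials, and hence to make $1$ a core point for the separation. Your identity $N_1N_2\mp fg=\tfrac12\bigl[(N_1\mp f)(N_2+g)+(N_1\pm f)(N_2-g)\bigr]$ writes the left-hand side as a sum of products of two elements of $M$. A quadratic module, however, is only stable under multiplication by squares, not by its own elements; stability under products is what distinguishes a preordering. Your suggested repair does not help. Every element of $M$ is trivially ``SOS plus an element of $M$'', so that reduction is vacuous. In $N^2-f^2=(N-f)^2+2f(N-f)$, the term $2f(N-f)$ is not in $M$ because $f$ carries no sign.

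The missing idea is to prove closure under squaring with an identity in which every term is a square times an element of $M$. If $N\pm f\in M$ with $N>0$ (enlarge $N$ if needed), then
\[
N^2-f^2=\frac{1}{2N}\Bigl[(N+f)^2(N-f)+(N-f)^2(N+f)\Bigr]\in M,
\]
while $N^2+f^2\in\Sigma[\bm x]\subseteq M$, so $f^2\in H$. Since $H$ is a linear subspace (closed under sums and under scaling by any real constant), polarization $fg=\tfrac14\bigl[(f+g)^2-(f-g)^2\bigr]$ then gives $fg\in H$. With this replacement Step~1 yields $H=\mathbb{R}[\bm x]$, and the rest of your argument goes through.
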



Thus, for a fixed polynomial degree and fixed multiplier degree, the resulting SOS conditions lead to a finite-dimensional semi-definite program (SDP). Moreover, since the corresponding quadratic modules are Archimedean, Putinar's Positivstellensatz guarantees that every polynomial that is strictly positive on one of these regions admits an SOS representation at some finite degree. Consequently, the SOS hierarchy is complete in the sense that every strictly positive polynomial constraint is certified at a finite hierarchy level. They are formally shown in the sequel. The above results have also been widely applied in the existing reachability analysis literature, e.g., \cite{henrion2013convex,xue2019inner}.

\subsection{SOS Reformulation and Soundness}

We now translate the four inequalities in the barrier-like condition \eqref{barrier2}  into SOS constraints. 

Fix $\alpha>0$ and a prescribed probability level
$\epsilon\in (0,1)$. The scalar $\beta\geq0$ is treated as a decision
variable. Define the four residual polynomials
\begin{equation}
\label{eq:residuals}
\begin{aligned}
p_1(\bm{x})
&:=
v(\bm{x})
-\epsilon
-\frac{\beta}{\alpha}(\epsilon-1),\\
p_2(\bm{x})
&:=
\mathcal Lv(\bm{x})
-\alpha v(\bm{x})
-\beta,\\
p_3(\bm{x})
&:=
1-v(\bm{x}),\\
p_4(\bm{x})
&:=
-\alpha v(\bm{x})-\beta.
\end{aligned}
\end{equation}


We enforce the four inequalities by requiring each residual to belong to
the corresponding quadratic module:
\begin{equation}
\label{putin_sos}
\begin{cases}
p_1(\bm{x})=
\sigma_{0,0}(\bm{x})+\sum_{i=1}^{k_0+1}
\sigma_{0,i}(\bm{x})g_{0,i}(\bm{x}),\\[1.2ex]
p_2(\bm{x})=\sigma_{D,0}(\bm{x})+\sum_{j=1}^{k_u+1}
\sigma_{D,j}(\bm{x})g_{D,j}(\bm{x}),\\[1.2ex]
p_3(\bm{x})=\sigma_{r,0}(\bm{x})+ \sum_{\ell=1}^{k_r+1}
\sigma_{r,\ell}(\bm{x})g_{r,\ell}(\bm{x}),\\[1.2ex]
p_4(\bm{x})=\sigma_{X,0}(\bm{x})+
\displaystyle\sum_{s=1}^{k_e+1}
\sigma_{X,s}(\bm{x})g_{X,s}(\bm{x}),
\end{cases}
\end{equation}
which is equivalent to
\begin{equation}
\label{SOS}
\begin{cases}
p_1(\bm{x})-
\displaystyle\sum_{i=1}^{k_0+1}
\sigma_{0,i}(\bm{x})g_{0,i}(\bm{x}) \in \sum[\bm{x}],\\[1.2ex]
p_2(\bm{x})-
\displaystyle\sum_{j=1}^{k_u+1}
\sigma_{D,j}(\bm{x})g_{D,j}(\bm{x}) \in \sum[\bm{x}],\\[1.2ex]
p_3(\bm{x})-
\displaystyle\sum_{\ell=1}^{k_r+1}
\sigma_{r,\ell}(\bm{x})g_{r,\ell}(\bm{x}) \in \sum[\bm{x}],\\[1.2ex]
p_4(\bm{x})-
\displaystyle\sum_{s=1}^{k_e+1}
\sigma_{X,s}(\bm{x})g_{X,s}(\bm{x}) \in \sum[\bm{x}].
\end{cases}
\end{equation}
Here, all multiplier polynomials are SOS, and $g_{\ell,k_\ell+1}=g_{\mathrm{b}}, \ell\in\{0,u,r,e\}$.

\begin{theorem}[Soundness of the SOS certificate]
\label{thm:sos_soundness}
Suppose that Assumptions \ref{assump:sde_conditions} and
\ref{assm2} hold. Fix $\alpha>0$ and a prescribed
$\epsilon\in (0,1)$. If a polynomial $v\in \mathbb{R}[\bm{x}]$, a scalar $\beta\geq0$, and SOS
multipliers satisfy \eqref{SOS}, then $v$ satisfies the  barrier-like condition \eqref{barrier2}.
Consequently, $\mathbb{P}_{\mathrm{RA}}(\bm{x}_0)
\geq \epsilon, \forall\bm{x}_0\in\mathcal X_0$.
\end{theorem}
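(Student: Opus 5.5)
The argument is the standard quadratic-module soundness check: a polynomial in $\mathcal{M}(G)$ is nonnegative on the set cut out by $G$, because every term is a product of nonnegative quantities there. Once this is applied to each residual, the claim is just the sufficiency part of Theorem~\ref{thm:poly_sbc}.

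\textbf{Step 1: the residuals are nonnegative.} Take the first constraint in \eqref{SOS}. It says that
\[
\sigma_{0,0}:=p_1-\sum_{i=1}^{k_0+1}\sigma_{0,i}\,g_{0,i}
\]
is SOS, so $p_1=\sigma_{0,0}+\sum_i\sigma_{0,i}g_{0,i}$. Now fix $\bm{x}\in\mathcal{X}_0$.
\begin{itemize}
\item By Assumption~\ref{ass:poly}, $g_{0,i}(\bm{x})\geq 0$ for $i=1,\ldots,k_0$.
\item $g_{\mathrm b}(\bm{x})\geq0$, because $\mathcal{X}_0\subseteq\overline{\mathcal{X}}$ lies in the ball of radius $\sqrt{R}$.
\item Every SOS polynomial is pointwise nonnegative.
\end{itemize}
Hence each summand is nonnegative at $\bm{x}$, and $p_1(\bm{x})\geq0$, which is the first inequality of \eqref{barrier2}.

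The same argument applies to the other three residuals:
\begin{itemize}
\item $p_2\geq0$ on $\overline{\mathcal{X}\setminus\mathcal{X}_r}$. Since $\mathcal{D}=\mathcal{X}\setminus\mathcal{X}_r$, this set equals $\overline{\mathcal{D}}$.
\item $p_3\geq0$ on $\partial\mathcal{X}_r$.
\item $p_4\geq0$ on $\partial\mathcal{X}$.
\end{itemize}
Here the multipliers $g_{D,j}$ and $g_{X,s}$ written in \eqref{SOS} are read as the defining polynomials $g_{u,j}$ and $g_{e,s}$ of $G_u$ and $G_e$. With the definitions in \eqref{eq:residuals}, these three statements are exactly the remaining three inequalities of \eqref{barrier2}.

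\textbf{Step 2: conclude.} Together with $\alpha>0$, $\beta\geq0$ and $v\in\mathbb{R}[\bm{x}]$, Step 1 gives all four inequalities of \eqref{barrier2}. The hypotheses of the sufficiency part of Theorem~\ref{thm:poly_sbc} (Assumptions~\ref{assump:sde_conditions} and~\ref{assm2}) hold. Applying it yields $\mathbb{P}_{\mathrm{RA}}(\bm{x}_0)\geq\epsilon$ for all $\bm{x}_0\in\mathcal{X}_0$.

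There is no real obstacle. Two points deserve care:
\begin{itemize}
\item The semialgebraic descriptions must describe exactly the sets appearing in \eqref{barrier2}, in particular $\overline{\mathcal{X}\setminus\mathcal{X}_r}=\overline{\mathcal{D}}$. Adding the redundant constraint $g_{\mathrm b}$ does not change any set.
\item $p_2$ must be a polynomial, which holds because $\bm{b}$ and $\bm{\sigma}$ are polynomial. Putinar's theorem is not used here; it is needed only for completeness.
\end{itemize}
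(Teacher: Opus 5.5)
Your proposal is correct and matches the paper's proof: each residual lies in the quadratic module of its region, so it is nonnegative there, because SOS polynomials are globally nonnegative and the defining polynomials (including the redundant $g_{\mathrm b}$) are nonnegative on that set. This gives \eqref{barrier2}, and the probability bound then follows from the sufficiency part of Theorem~\ref{thm:poly_sbc}; your reading of $g_{D,j}$ and $g_{X,s}$ as $g_{u,j}$ and $g_{e,s}$ is the intended one.
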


\begin{proof}
Every SOS polynomial is globally nonnegative, and every defining
polynomial is nonnegative on its associated semialgebraic set.
Therefore, each right-hand side of \eqref{SOS} is nonnegative on the
corresponding set. Hence
\[
p_1\geq0
\quad\text{on }\mathcal X_0,
\qquad
p_2\geq0
\quad\text{on }\overline{\mathcal{X}\setminus \mathcal{X}_r},
\]
and
\[
p_3\geq0
\quad\text{on }\partial\mathcal X_r,
\qquad
p_4\geq0
\quad\text{on }\partial\mathcal X.
\]
In particular, the barrier-like condition
\eqref{barrier2} holds. The reach-avoid guarantee then follows from
Theorem~\ref{thm:poly_sbc}.
\end{proof}




\subsection{Completeness of the SOS Formulation}

In this section, we show that the SOS hierarchy is complete over
polynomial barrier functions. The argument combines the
strict-inequality result established in Theorem \ref{thm:poly_sbc}
with Putinar's Positivstellensatz.

\begin{theorem}[Completeness of the SOS hierarchy]
\label{cor:sos_complete}
Suppose Assumptions \ref{assump:sde_conditions}--\ref{ass:poly} hold. If the reach-avoid probability is strictly larger than $\epsilon$ for every state in the initial set $\mathcal{X}_0$, i.e., $\mathbb{P}_{\mathrm{RA}}(\bm{x}_0)>\epsilon, \forall\bm{x}_0\in\mathcal{X}_0$, there exist a polynomial $v\in \mathbb{R}[\bm{x}]$, $\alpha>0$, and $\beta\geq 0$ such that the SOS program \eqref{SOS} hold for some finite-degree SOS multipliers.
\end{theorem}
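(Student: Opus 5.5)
The plan is to combine the strict-inequality part of Theorem~\ref{thm:poly_sbc} with Putinar's Positivstellensatz (Theorem~\ref{thm:putinar}), applied separately to each of the four residual constraints.

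\textbf{Step 1: a strictly feasible polynomial.} Since $\mathbb{P}_{\mathrm{RA}}(\bm{x}_0)>\epsilon$ for all $\bm{x}_0\in\mathcal{X}_0$, the second part of Theorem~\ref{thm:poly_sbc} supplies $v\in\mathbb{R}[\bm{x}]$, $\alpha>0$ and $\beta\geq0$ for which all four inequalities in \eqref{barrier2} hold strictly. In terms of the residuals \eqref{eq:residuals}, this means
\[
p_1>0 \text{ on } \mathcal{X}_0,\quad p_2>0 \text{ on } \overline{\mathcal{D}},\quad p_3>0 \text{ on } \partial\mathcal{X}_r,\quad p_4>0 \text{ on } \partial\mathcal{X}.
\]
Here $\overline{\mathcal{D}}=\overline{\mathcal{X}\setminus\mathcal{X}_r}$ is exactly the region described by $G_u$ in Assumption~\ref{ass:poly}.

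\textbf{Step 2: each residual is a polynomial.} This is where Assumption~\ref{ass:poly}(2) is needed. Because $\bm{b}$ and $\bm{\sigma}$ are polynomial, so is $\bm{a}=\bm{\sigma}\bm{\sigma}^\top$. Hence $\mathcal{L}v=\nabla v^\top\bm{b}+\tfrac12\operatorname{tr}(\bm{a}\bm{H}v)$ is a polynomial, and $p_2\in\mathbb{R}[\bm{x}]$. The residuals $p_1,p_3,p_4$ are affine in $v$ with constant offsets, so they are polynomials as well.

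\textbf{Step 3: apply Putinar to each region.} The four regions are compact basic semialgebraic sets. After adjoining the redundant ball constraint $g_{\mathrm b}$, which does not change any of the sets, the quadratic modules $\mathcal{M}(G_0)$, $\mathcal{M}(G_u)$, $\mathcal{M}(G_r)$ and $\mathcal{M}(G_e)$ are Archimedean. Theorem~\ref{thm:putinar} then gives, for each $p_i$, SOS multipliers with
\[
p_i=\sigma_{\cdot,0}+\sum_j \sigma_{\cdot,j}\, g_{\cdot,j},
\]
which is exactly \eqref{putin_sos}, equivalently \eqref{SOS}. Each representation involves finitely many polynomials, so all multipliers have finite degree. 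Taking the maximum degree over the four representations gives a single finite hierarchy level at which \eqref{SOS} is feasible with the fixed $v$, $\alpha$, $\beta$.

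\textbf{Main obstacle.} The step needing care is matching sets. The second inequality in \eqref{barrier2} must hold strictly on the closed set $\overline{\mathcal{D}}$, because Putinar requires strict positivity on the whole semialgebraic set given by $G_u$. Theorem~\ref{thm:poly_sbc} provides exactly this: the strict inequality was extended to the closure by continuity and preserved under the $C^2$ approximation on $\overline{\mathcal{D}}$. The other three regions coincide with the sets on which the strict inequalities were established, so no further argument is needed there.
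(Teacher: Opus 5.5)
Your proposal is correct and follows essentially the same route as the paper. You invoke the strict-inequality (necessity) part of Theorem~\ref{thm:poly_sbc} to make the residuals $p_1,\dots,p_4$ strictly positive on the four compact regions, then apply Putinar's Positivstellensatz separately to each quadratic module, made Archimedean by the ball constraint $g_{\mathrm b}$. Your explicit check in Step 2 that $p_2=\mathcal{L}v-\alpha v-\beta$ is a polynomial (because $\bm b$ and $\bm\sigma$ are polynomial) is a detail the paper leaves implicit, and it is worth stating.
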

\begin{proof}
By Theorem \ref{thm:poly_sbc}, there
exist a polynomial $v \in \mathbb{R}[\bm{x}]$, $\beta\geq 0$, and $\alpha>0$ such that the four barrier-like
inequalities in \eqref{barrier2} hold strictly on the
compact sets on which the corresponding residuals are imposed.
Therefore, the residual polynomials in \eqref{eq:residuals} satisfy
\[
p_1>0
\quad\text{on }\mathcal X_0,
\qquad
p_2>0
\quad\text{on }\overline{\mathcal D},
\]
and
\[
p_3>0
\quad\text{on }\partial\mathcal X_r,
\qquad
p_4>0
\quad\text{on }\partial\mathcal X.
\]

Since the four corresponding quadratic modules are Archimedean,
Putinar's Positivstellensatz applies separately to the four regions.
Hence there exist SOS polynomials \[
\left\{
\begin{split}
&\sigma_{0,0},\ldots,\sigma_{0,k_0+1},\sigma_{D,0},\ldots,\sigma_{D,k_u+1},\\
&\sigma_{r,0},\ldots,\sigma_{r,k_r+1},\sigma_{X,0},\ldots,\sigma_{X,k_e+1}
\end{split}\right\}
\] such that \eqref{putin_sos} holds. \eqref{putin_sos} is  equivalent to \eqref{SOS}. Each representation contains finitely many SOS polynomials and therefore has finite degree. Consequently, there exists a finite degree level at which the corresponding SOS program \eqref{SOS} is feasible.
\end{proof}


\section{Illustrative Examples}
\label{sec:ex}
In this section, we demonstrate the application of our theoretical developments through two numerical examples with polynomial systems. We aim to find polynomial barrier functions that satisfy the SOS constraints \eqref{SOS}. For a fixed $\alpha>0$, the resulting SOS program is a convex SDP in the coefficients of the polynomial $p$,
the scalar $\beta$, and the Gram matrices of the SOS multipliers. The parameter $\alpha$ therefore determines the convex SDP subproblem to be solved. As shown in Lemma \ref{lemma:stri} and the proof of Theorem \ref{thm:poly_sbc}, the barrier-like condition becomes less conservative as $\alpha$ approaches $0$. This suggests a practical computational strategy: fix $\alpha$ and solve a sequence of convex SDPs for values of $\alpha$ successively closer to $0$. In the experiments, the SOS programs are formulated using the MATLAB-based modeling toolbox YALMIP \cite{lofberg2004yalmip} and the resulting SDPs are solved using MOSEK 10.1.21 \cite{aps2019mosek}. To ensure numerical stability during the solution of these SDPs, we additionally impose a constraint on the coefficients of the unknown polynomials, specifically restricting them to the interval $[-10^2, 10^2]$.

\begin{example}
\label{ex1}
Consider the one-dimensional SDE
\[
dX(t)=-X(t)dt+\frac{\sqrt{2}}{2}X(t)dW(t).
\]
This model describes a simple population-dynamics system subject to multiplicative environmental fluctuations. We consider the compact initial set $\mathcal{X}_0=\left\{
x\in\mathbb{R} \mid 0.01-(x-0.8)^2\geq 0
\right\}$,
the safe set $\mathcal{X}=\left\{
x\in\mathbb{R} \mid 1-x^2>0
\right\}$, and the compact target set
$\mathcal{X}_r=\left\{
x\in\mathbb{R} \mid 1-100x^2\geq0
\right\}$. Thus, $\overline{\mathcal{D}}=\overline{\mathcal{X}\setminus\mathcal{X}_r}= \{x\mid 1-x^2\geq 0, x^2-0.01\geq 0\}$.
For this system,
\[
b(x)=-x,
\qquad
\sigma(x)=\frac{\sqrt{2}}{2}x,
\qquad
a(x)=\sigma(x)^2=\frac{x^2}{2}.
\]
Both $b$ and $a$ are smooth. Moreover, $a(x)\geq\frac{1}{200}>0, \forall x\in \overline{\mathcal{D}}$,
so the diffusion is uniformly elliptic on the relevant continuation
component. The boundary sets are
\[
\partial\mathcal{X}=\{x\mid x^2-1=0\},
\qquad
\partial\mathcal{X}_r=\{x\mid x^2-0.01=0\},
\]
and hence $\partial\mathcal{X}\cap\partial\mathcal{X}_r=\emptyset$.
Thus, the regularity and ellipticity assumptions required by the proposed Dirichlet characterization are satisfied.

Since the initial states are positive and the solution preserves the sign of the initial state, reaching the target is equivalent to reaching the boundary $x=0.1$. Therefore, for 
$x_0\in\mathcal{X}_0$, 
\[ 
\mathbb{P}_{\mathrm{RA}}(x_0) = \mathbb{P}(\tau_{0.1}<\tau_1), 
\] where $\tau_{0.1}$ and $\tau_1$ denote the first hitting times of $0.1$ and $1$, respectively.  Consequently, for $x_0\in(0.1,1)$, 
\[ 
\mathbb{P}_{\mathrm{RA}}(x_0) = \frac{s(1)-s(x_0)} {s(1)-s(0.1)} = \frac{1-x_0^5}{1-0.1^5}, 
\] 
where $s(x):=x^5$.
Since this expression is decreasing in $x_0$, 
\[ 
\min_{x_0\in\mathcal{X}_0}\mathbb{P}_{\mathrm{RA}}(x_0) = \mathbb{P}_{\mathrm{RA}}(0.9) = \frac{1-0.9^5}{1-0.1^5} \approx 0.4095. 
\] 
The proof of the above statement is shown in Lemma \ref{lem:ex1_worstcase} in Appendix. This value provides a benchmark for the certified lower bounds obtained from the SOS program \eqref{SOS}.

 The computed feasibility results are reported in Table \ref{tab:sdpex1_set} with $g_{\mathrm{b}}=1-x^2$.
 
\begin{table}[H]
\caption{\centering Feasibility of SDP \eqref{SOS} for Example \ref{ex1}\\
(\ding{52}: feasible; \ding{55}: infeasible)}
\centering
\begin{tabular}{|c|c|c|c|c|c|}
\hline
Degree&$\alpha$ & $\epsilon=0.3$ & $\epsilon=0.35$ & $\epsilon=0.4$  &$\epsilon=0.408$ 
\\
\hline
4 &0.1  & \ding{55} & \ding{55} & \ding{55} & \ding{55} \\
4 &0.01  & \ding{52} & \ding{52} & \ding{55} & \ding{55} \\
4 &0.001  & \ding{52} & \ding{52} & \ding{55} & \ding{55} \\
6 &0.1  & \ding{55} & \ding{55} & \ding{55} & \ding{55} \\
6 &0.01  & \ding{52} & \ding{52} & \ding{55} & \ding{55} \\
6 &0.001  & \ding{52} & \ding{52} & \ding{52} & \ding{55} \\
8 &0.1  & \ding{52} & \ding{55} & \ding{55} & \ding{55} \\
8 &0.01  & \ding{52} & \ding{52} & \ding{52} & \ding{55} \\
8 &0.001  & \ding{52} & \ding{52} & \ding{52} & \ding{52} \\
10 &0.1  & \ding{55} & \ding{55} & \ding{55}& \ding{55} \\
10 &0.01  & \ding{52} & \ding{52} & \ding{52}& \ding{55} \\
10 &0.001  & \ding{52} & \ding{52} & \ding{52}& \ding{52} \\
12 &0.1  & \ding{52} & \ding{55} & \ding{55} & \ding{55} \\
12 &0.01  & \ding{52} & \ding{52} & \ding{52} & \ding{55} \\
12 &0.001  & \ding{52} & \ding{52} & \ding{52} & \ding{52} \\
14 &0.1  & \ding{52} & \ding{55} & \ding{55} & \ding{55}\\
14 &0.01  & \ding{52} & \ding{52} & \ding{52} & \ding{55}\\
14 &0.001  & \ding{52} & \ding{52} & \ding{52} & \ding{52}\\
\hline
\end{tabular}
\label{tab:sdpex1_set}
\end{table}

\vspace{0.2cm}

\end{example}

\begin{example}
\label{ex2}

Consider the two-dimensional Ornstein--Uhlenbeck process
\[
d\bm{X}_t=-\bm{X}_t\,dt+\sigma\,d\bm W_t,
\qquad
\bm{X}_t\in\mathbb{R}^2,
\]
with $\sigma=0.3$. Let the safe, target, and initial sets be defined, respectively, as  $\mathcal{X}=
\{\bm{x}\in\mathbb{R}^2\mid 1-\|\bm{x}\|>0\}$, $\mathcal{X}_r = \{\bm{x}\in\mathbb{R}^2\mid 0.25-\|\bm{x}\|\geq 0\}$, and   $\mathcal{X}_0 =\{\bm{x} \in \mathbb{R}^2\mid (x-0.75)^2+y^2=0\}$. Thus,  $\mathcal D=\mathcal{X}\setminus\mathcal{X}_r=
\{\bm{x}\in\mathbb{R}^2 \mid \|\bm{x}\|-0.25>0, 1-\|\bm{x}\|>0\}$.

All standing assumptions are satisfied. The drift
$\bm b(\bm{x})=-\bm{x}$ is globally Lipschitz, and
the diffusion matrix $\bm a(\bm{x})=\sigma^2 I_2$ is uniformly elliptic with ellipticity constant $\lambda=\sigma^2>0$.
Moreover,
\[
\partial\mathcal{X}
=
\{\bm{x} \mid \|\bm{x}\|-1=0\},
\qquad
\partial\mathcal{X}_r
=
\{\bm{x} \mid \|\bm{x}\|-0.25=0\}
\]
are disjoint $C^\infty$ curves. Hence, $\mathcal D$ is a bounded
$C^{2,\gamma}$ domain for every $\gamma\in(0,1)$. The coefficients $\bm b$ and $\bm a$ are polynomial, so this example is also compatible with the SOS formulation.

By rotational symmetry, the reach-avoid probability depends only on the radial coordinate $\rho=\|\bm{x}\|$. Thus,
\[\mathbb{P}_{\mathrm{RA}}(\bm{x})=P(\rho),\] 
where 
\[
P(\rho)
=
\frac{
\displaystyle
\int_{\rho}^{1}
\frac{1}{s}
\exp\left(\frac{s^2}{\sigma^2}\right)\,ds
}{
\displaystyle
\int_{0.25}^{1}
\frac{1}{s}
\exp\left(\frac{s^2}{\sigma^2}\right)\,ds
}.
\]
The proof of the above statement is shown in Lemma \ref{lem:ou2d_exact} in Appendix.

For $\sigma=0.3$ and $\bm{x}_0=(0.75,0)$, $\mathbb{P}_{\mathrm{RA}}(0.75,0) \approx 0.9846$. Thus, the strict-margin assumption holds for every $0\leq\epsilon<0.9846$. In the numerical
experiments, we consider $\epsilon\in\{0.9, 0.95, 0.98\}$. The computed feasibility results are reported in Table \ref{tab:sdpex2_set} with $g_{\mathrm{b}}=1-\|\bm{x}\|^2$.

\begin{table}[H]
\caption{\centering Feasibility of SDP \eqref{SOS} for Example \ref{ex2}\\
(\ding{52}: feasible; \ding{55}: infeasible)}
\centering
\begin{tabular}{|c|c|c|c|c|}
\hline
Degree&$\alpha$ & $\epsilon=0.9$ & $\epsilon=0.95$ & $\epsilon=0.98$ 
\\
\hline
4 &0.1  & \ding{55} & \ding{55} & \ding{55}  \\
4 &0.01  & \ding{55} & \ding{55} & \ding{55}  \\
4 &0.0001  & \ding{55} & \ding{55} & \ding{55} \\
6 &0.1  & \ding{55} & \ding{55} & \ding{55}  \\
6 &0.01  & \ding{55} & \ding{55} & \ding{55} \\
6 &0.0001  & \ding{55} & \ding{55} & \ding{55}  \\
8 &0.1  & \ding{55} & \ding{55} & \ding{55}  \\
8 &0.01  & \ding{52} & \ding{55} & \ding{55} \\
8 &0.0001  & \ding{52} & \ding{55} & \ding{55} \\
10 &0.1  & \ding{55} & \ding{55} & \ding{55}\\
10 &0.01  & \ding{52} & \ding{52} & \ding{55} \\
10 &0.0001  & \ding{52} & \ding{52} & \ding{55} \\
12 &0.1  & \ding{55} & \ding{55} & \ding{55} \\
12 &0.01  & \ding{52} & \ding{52} & \ding{55}\\
12 &0.0001  & \ding{52} & \ding{52} & \ding{55} \\
14 &0.1  & \ding{55} & \ding{55} & \ding{55} \\
14 &0.01  & \ding{52} & \ding{52} & \ding{55} \\
14 &0.0001  & \ding{52} & \ding{52} & \ding{52} \\
16 &0.1  & \ding{55} & \ding{55} & \ding{55} \\
16 &0.01  & \ding{52} & \ding{52} & \ding{55} \\
16 &0.0001  & \ding{52} & \ding{52} & \ding{52} \\
18 &0.1  & \ding{55} & \ding{55} & \ding{55} \\
18 &0.01  & \ding{52} & \ding{52} & \ding{55} \\
18 &0.0001  & \ding{52} & \ding{52} & \ding{52} \\
20 &0.1  & \ding{55} & \ding{55} & \ding{55} \\
20 &0.01  & \ding{52} & \ding{52} & \ding{55} \\
20 &0.0001  & \ding{52} & \ding{52} & \ding{52} \\
\hline
\end{tabular}
\label{tab:sdpex2_set}
\end{table}
\vspace{0.2cm}
\end{example}

The results in Tables \ref{tab:sdpex1_set} and \ref{tab:sdpex2_set}
illustrate the behavior of the proposed SOS-based verification
framework from complementary perspectives. First, for a fixed polynomial degree, decreasing $\alpha$ generally improves feasibility. This effect is evident in both examples. In Example \ref{ex1}, for instance, with polynomial degree $8$, the SDP is infeasible for $\epsilon=0.4$ when $\alpha=0.1$, whereas it becomes feasible for $\alpha=0.01$ and $\alpha=0.001$. A similar trend is observed in Example \ref{ex2}: for polynomial degrees $8$--$20$, the cases with $\alpha=0.01$ or $\alpha=0.0001$ are substantially more favorable than those with $\alpha=0.1$. This behavior is consistent with Lemma \ref{lemma:stri}, which shows that the barrier-like condition becomes less conservative as $\alpha\to0$. Second, increasing the polynomial degree generally improves the ability of the SOS relaxation to recover feasible certificates. In Example \ref{ex1}, when $\alpha=0.001$, the  threshold $\epsilon=0.408$ becomes feasible at degree $8$. Since the exact worst-case reach-avoid probability over the initial set is approximately $0.4095$, the resulting certified lower bound is close to the exact value. In Example \ref{ex2}, the same phenomenon is particularly clear near the largest probability threshold. For $\alpha=0.0001$, the SDP is feasible for $\epsilon=0.9$ starting at degree $8$, for $\epsilon=0.95$ starting at degree $10$, and for
$\epsilon=0.98$ starting at degree $14$. Thus, increasing the
polynomial degree systematically allows the SOS relaxation to recover certificates for increasingly tight probability bounds.

Example \ref{ex2} also demonstrates that the difficulty of the
verification problem depends strongly on the target probability level. For $\sigma=0.3$, the exact reach-avoid probability at the initial state
is $\mathbb{P}_{\mathrm{RA}}(\bm{x}_0) \approx 0.9846$.
Consequently, the threshold $\epsilon=0.9$ leaves a relatively large margin below the exact value, and feasible certificates are obtained already at degree $8$ for sufficiently small $\alpha$. In contrast, the threshold $\epsilon=0.98$ is much closer to the exact probability, and certificates require degree $14$ for $\alpha=0.0001$. This illustrates the expected behavior that tighter certified probability bounds generally require richer polynomial certificates. The results of Example \ref{ex2} further illustrate the interaction between the discount parameter and the polynomial degree. For example, at degree $14$, $\epsilon=0.95$ is feasible for both $\alpha=0.01$ and $\alpha=0.0001$, whereas $\epsilon=0.98$ is feasible only for $\alpha=0.0001$. This indicates that reducing $\alpha$ can compensate, to some extent, for the conservatism of the discounted barrier-like conditions. Nevertheless, the effect of $\alpha$ and the effect of polynomial degree are distinct: reducing $\alpha$ improves the
underlying barrier-like condition, while increasing the polynomial degree improves the approximation capability of the finite-dimensional SOS representation. The two examples also highlight the quantitative tightness that can be achieved by the proposed approach. In Example \ref{ex1}, the certified bound $\epsilon=0.408$ is within approximately $1.5\times10^{-3}$ of the exact worst-case probability $0.4095$. In Example \ref{ex2}, the bound $\epsilon=0.98$ is within approximately $4.55\times10^{-3}$ of the exact probability $0.9846$. Thus, in both examples, the SOS certificates can approach the exact verification threshold from below as the discount parameter is reduced and the polynomial degree is increased. Conversely, whenever the SOS program is solved with a valid numerical certificate, feasibility provides a certified lower bound $\mathbb{P}_{\mathrm{RA}}(\bm{x}_0)\geq\epsilon, \forall \bm{x}_0\in\mathcal{X}_0$, subject to the numerical accuracy of the SDP solution. 

Overall, the numerical experiments demonstrate the trade-off among the discount parameter $\alpha$, the polynomial degree, and the tightness of the certified probability bound. Smaller values of $\alpha$ reduce the conservatism of the resulting barrier-like condition, while higher polynomial degrees provide greater expressive power for representing barrier functions. As illustrated by both examples, these two mechanisms
together enable the SOS hierarchy to obtain certified lower bounds that approach the corresponding exact reach-avoid probabilities. This behavior is consistent with the completeness result for the polynomial barrier hierarchy under the stated Archimedean assumptions.

\section{Conclusion}
\label{sec:con}
We studied infinite-horizon reach-avoid verification for continuous-time stochastic systems modeled by SDEs under suitable regularity and uniform ellipticity assumptions. We established a necessary and sufficient barrier-like condition in terms of polynomial barrier functions for infinite-horizon reach-avoid verification. We further developed an SOS programming formulation with a complete degree hierarchy for synthesizing polynomial barrier functions for polynomial systems. Finally, we demonstrated the theoretical results on two numerical examples.

Future work will focus on relaxing the regularity and uniform ellipticity assumptions to extend the framework to more general SDEs. We also plan to investigate sufficient and necessary barrier-like conditions for finite-horizon safety and reach-avoid verification of SDEs, as well as develop scalable SOS methods for higher-dimensional systems.

\section*{ACKNOWLEDGMENT}
OpenAI’s GPT-5.6 Luna (free version) \cite{chang2026chatgpt} and Deepseek (free version) \cite{liu2024deepseek} were primarily used to polish the language of this paper prior to submission. No methods or computational results presented in this work were generated by AI models. The authors take full responsibility for the accuracy and integrity of all content reported herein.

\section*{References}
\bibliographystyle{IEEEtran}
\bibliography{ref}

\section{Appendix}
\label{subsec:1d_exact}

\begin{lemma}
\label{lem:ex1_worstcase}
In Example~\ref{ex1},
\[
\min_{x_0\in\mathcal{X}_0}
\mathbb{P}_{\mathrm{RA}}(x_0)
=
\mathbb{P}_{\mathrm{RA}}(0.9).
\]
\end{lemma}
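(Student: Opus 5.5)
The plan is to compute $\mathbb{P}_{\mathrm{RA}}(x_0)$ explicitly for every $x_0\in\mathcal{X}_0=[0.7,0.9]$ via a scale-function argument, and then observe that the resulting formula is decreasing in $x_0$.

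First, solve the SDE explicitly. By It\^o's formula applied to $\log X(t)$, with $b(x)=-x$ and $\sigma(x)=\frac{\sqrt2}{2}x$, we get
\[
X(t)=x_0\exp\!\Big(-\tfrac54 t+\tfrac{\sqrt2}{2}W(t)\Big).
\]
Hence a path started at $x_0>0$ stays strictly positive for all time. Consequently, from $x_0\in[0.7,0.9]$ the path cannot reach $-1$ or any point of $[-0.1,0)$. Entering $\mathcal{X}_r=[-0.1,0.1]$ is therefore the same as hitting $0.1$, and leaving $\mathcal{X}=(-1,1)$ is the same as hitting $1$. Both events happen from inside $(0.1,1)$ by path continuity, so $\mathbb{P}_{\mathrm{RA}}(x_0)=\mathbb{P}(\tau_{0.1}<\tau_1)$. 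Moreover, $\log X$ is a Brownian motion with drift, so it exits the bounded interval $(\log 0.1,0)$ almost surely. Thus $\tau:=\tau_{0.1}\wedge\tau_1<\infty$ a.s.; this also follows from the exit-time result cited in the proof of Lemma~\ref{diri}.

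Next, find the scale function. We need $s$ with
\[
\mathcal{L}s=-x s'(x)+\tfrac{x^2}{4}s''(x)=0 .
\]
This gives $s''/s'=4/x$, so $s'(x)\propto x^4$, and we may take $s(x)=x^5$.

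Then apply Dynkin's formula \eqref{dynkin} to $s$ along the stopped process. Before time $\tau$ the process lies in $[0.1,1]$, where $s$ and $s'\sigma$ are bounded. So the stochastic-integral term of the stopped process is a true martingale and the formula applies at $t\wedge\tau$. Letting $t\to\infty$ by bounded convergence, and using $\mathcal{L}s=0$, yields
\[
x_0^5=\mathbb{E}[s(X(\tau))]=0.1^5\,p+1\cdot(1-p),\qquad p=\mathbb{P}(\tau_{0.1}<\tau_1).
\]
Solving gives
\[
p=\frac{1-x_0^5}{1-0.1^5},
\]
which is strictly decreasing in $x_0$ on $[0.7,0.9]$. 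Its minimum over $\mathcal{X}_0$ is therefore attained at $x_0=0.9$, which proves the claim.

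The only delicate step is the justification of Dynkin's formula and the passage to the limit. This is handled by working with the stopped process on the compact interval $[0.1,1]$ and by the almost-sure finiteness of $\tau$.
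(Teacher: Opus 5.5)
Your proposal is correct and follows essentially the same route as the paper: the explicit geometric-Brownian solution gives positivity, which reduces the problem to $\mathbb{P}(\tau_{0.1}<\tau_1)$; then the scale function $s(x)=x^5$ with the stopped It\^o/Dynkin argument and bounded convergence, plus monotonicity of $(1-x_0^5)/(1-0.1^5)$ on $[0.7,0.9]$, finishes the proof. The only minor difference is that you also justify $\tau<\infty$ a.s. directly from $\log X$ being a drifted Brownian motion, whereas the paper only invokes the uniform-ellipticity exit-time result (Proposition 10.1 in Baldi); both justifications are valid.
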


\begin{proof}
For $x_0\in\mathcal{X}_0=[0.7,0.9]$, the explicit solution of the
SDE in Example~\ref{ex1} is
\[
X_t=x_0\exp\left(
-\frac54t+\frac{\sqrt{2}}{2}W_t
\right).
\]
Hence
\[
X_t>0,
\qquad
\forall t\ge0,
\quad\text{a.s.}
\]
Since
\[
\mathcal{X}_r=[-0.1,0.1],
\qquad
\mathcal{X}=(-1,1),
\]
a trajectory starting from $x_0\in[0.7,0.9]$ can reach the target only
by hitting $0.1$ and can leave the safe set only by hitting $1$.
Therefore,
\[
\mathbb{P}_{\mathrm{RA}}(x)
=
\mathbb{P}(\tau_{0.1}<\tau_1),
\qquad
x\in(0.1,1).
\]

Define
\[
\tau:=\tau_{0.1}\wedge\tau_1
\]
and consider the function
\[
s(x):=x^5.
\]
The infinitesimal generator of $X_t$ is
\[
\mathcal{L}f(x)=-xf'(x)+
\frac{x^2}{4}f''(x).
\]
Since
\[
s'(x)=5x^4,
\qquad
s''(x)=20x^3,
\]
we have
\[
\mathcal{L}s(x)=-x(5x^4)+
\frac{x^2}{4}(20x^3)=0.
\]
Thus, by It\^o's formula,
\[
s(X_{t\wedge\tau})=s(x)+
\int_0^{t\wedge\tau}
s'(X_u)\frac{\sqrt{2}}{2}X_u\,dW_u.
\]
Since $X_{t\wedge\tau}\in[0.1,1]$, the integrand is bounded on
$[0,\tau]$. Hence the stochastic integral is a martingale, and
\[
\mathbb{E}[s(X_{t\wedge\tau})]=s(x).
\]

Because the process is confined to the compact interval $[0.1,1]$
before $\tau$ and $a$ is uniformly elliptic on $[0.1,1]$,  Proposition 10.1 in \cite{baldi2017stochastic} gives
\[
\tau<\infty
\qquad\text{a.s.}
\]
Therefore,
\[
X_{t\wedge\tau}\longrightarrow X_\tau
\qquad\text{a.s. as }t\to\infty.
\]
Since $0.1\le X_{t\wedge\tau}\le1$,
dominated convergence yields
\[
s(x)
=
\mathbb{E}[s(X_\tau)].
\]
At the stopping time $\tau$,
\[
X_\tau=
\begin{cases}
0.1, & \tau_{0.1}<\tau_1,\\
1, & \tau_1<\tau_{0.1},
\end{cases}
\qquad\text{a.s.}
\]
Since the sample paths are continuous and the two boundary points are
distinct, the probability of simultaneous hitting is zero. Hence
\[
\begin{aligned}
s(x)
&=
s(0.1)
\mathbb{P}(\tau_{0.1}<\tau_1)
+
s(1)
\mathbb{P}(\tau_1<\tau_{0.1})\\
&=
(0.1)^5
\mathbb{P}(\tau_{0.1}<\tau_1)
+
1-
\mathbb{P}(\tau_{0.1}<\tau_1).
\end{aligned}
\]
Solving for the hitting probability gives
\[
\mathbb{P}(\tau_{0.1}<\tau_1)
=
\frac{1-x^5}{1-(0.1)^5},
\qquad
x\in(0.1,1).
\]
Therefore,
\[
\mathbb{P}_{\mathrm{RA}}(x)
=
\frac{1-x^5}{1-(0.1)^5}.
\]
Moreover,
\[
\frac{d}{dx}\mathbb{P}_{\mathrm{RA}}(x)
=
-\frac{5x^4}{1-(0.1)^5}
<0,
\qquad
x\in(0.1,1).
\]
Thus $\mathbb{P}_{\mathrm{RA}}$ is strictly decreasing on
$\mathcal{X}_0=[0.7,0.9]$, and consequently
\[
\min_{x_0\in\mathcal{X}_0}
\mathbb{P}_{\mathrm{RA}}(x_0)
=
\mathbb{P}_{\mathrm{RA}}(0.9).
\]
We complete the proof.
\end{proof}

\begin{lemma}
\label{lem:ou2d_exact}
In Example~\ref{ex2},
\[
\mathbb{P}_{\mathrm{RA}}(0.75,0)=P(0.75).
\]
\end{lemma}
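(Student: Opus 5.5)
The plan is to reduce the two-dimensional problem to a one-dimensional exit problem for the radial process and then mimic the scale-function argument of Lemma~\ref{lem:ex1_worstcase}.

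\textbf{Radial generator and harmonic function.} The generator of the Ornstein--Uhlenbeck process is
\[
\mathcal{L}f(\bm{x})=-\bm{x}^\top\nabla f(\bm{x})+\tfrac{\sigma^2}{2}\Delta f(\bm{x}).
\]
For a radial function $f(\bm{x})=h(\|\bm{x}\|)$ with $\rho=\|\bm{x}\|>0$, this becomes
\[
\mathcal{L}f=-\rho h'(\rho)+\tfrac{\sigma^2}{2}\Bigl(h''(\rho)+\tfrac{1}{\rho}h'(\rho)\Bigr).
\]
I will choose $h(\rho)=P(\rho)$ from the example. Since $P'(\rho)=-c\,\rho^{-1}e^{\rho^2/\sigma^2}$ for the positive normalizing constant $c$, a direct differentiation gives
\[
h''=h'\Bigl(-\tfrac{1}{\rho}+\tfrac{2\rho}{\sigma^2}\Bigr).
\]
Substituting, $\tfrac{\sigma^2}{2}\bigl(h''+h'/\rho\bigr)=\rho h'$, so $\mathcal{L}P(\|\bm{x}\|)=0$ on the annulus $0.25\le\|\bm{x}\|\le1$. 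Moreover $P(0.25)=1$ and $P(1)=0$.

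\textbf{Probabilistic representation.} Let $u(\bm{x}):=P(\|\bm{x}\|)$, which is smooth on a neighborhood of $\overline{\mathcal{D}}$ because the annulus stays away from the origin.
\begin{itemize}
\item Apply It\^o's formula to $u(\bm{X}_{t\wedge\tau_{\mathcal{D}}})$. Since $\mathcal{L}u=0$, the drift term vanishes.
\item The stochastic integral has integrand $\sigma\nabla u$, which is bounded on $\overline{\mathcal{D}}$, so it is a true martingale. This gives $\mathbb{E}[u(\bm{X}_{t\wedge\tau_{\mathcal{D}}})]=u(\bm{x})$.
\item Uniform ellipticity ($\bm{a}=\sigma^2 I_2$) and boundedness of $\mathcal{D}$ give $\tau_{\mathcal{D}}<\infty$ a.s. by Proposition 10.1 in \cite{baldi2017stochastic}.
\item Since $0\le u\le 1$, dominated convergence as $t\to\infty$ yields $u(\bm{x})=\mathbb{E}[u(\bm{X}_{\tau_{\mathcal{D}}})]$.
\end{itemize}
This step is exactly the argument of Lemma~\ref{diri} with $\alpha=0$, which is legitimate here because $u$ is explicitly known and bounded.

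\textbf{Identifying the boundary hit.} By path continuity, $\bm{X}_{\tau_{\mathcal{D}}}$ lies either on $\|\bm{x}\|=0.25$, which is $\partial\mathcal{X}_r$, or on $\|\bm{x}\|=1$, which is $\partial\mathcal{X}$. These two circles are disjoint.

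The first case is precisely the event $\{\tau_r<\tau_{\mathcal{X}}\}$. Starting at $(0.75,0)\in\mathcal{D}$, the process cannot enter $\mathcal{X}_r$ without first crossing the circle of radius $0.25$. It also cannot leave $\mathcal{X}$ without first crossing the circle of radius $1$.

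Hence
\[
u(\bm{x})=1\cdot\mathbb{P}(\tau_r<\tau_{\mathcal{X}})+0,
\]
and evaluating at $\bm{x}=(0.75,0)$ gives $\mathbb{P}_{\mathrm{RA}}(0.75,0)=P(0.75)$.

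\textbf{Main obstacle.} The only delicate computation is verifying $\mathcal{L}u=0$, specifically the radial Laplacian term $h'/\rho$, which produces the $1/s$ factor in $P$. The rest is routine and parallels the appendix proof for Example~\ref{ex1}.
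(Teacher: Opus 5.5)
Your proof is correct, and it takes a somewhat different route from the paper's.

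\emph{The paper's route.} It first argues by rotational symmetry that $\mathbb{P}_{\mathrm{RA}}$ depends only on $\|\bm{x}\|$. It then applies It\^o's formula to $R_t=\|\bm{X}_t\|$ to obtain the one-dimensional diffusion $dR_t=(\sigma^2/(2R_t)-R_t)\,dt+\sigma\,dB_t$. Next it solves $\mathcal{L}_R s=0$ to derive the scale function. Finally it invokes the hitting formula $\mathbb{P}_\rho(\tau_a<\tau_b)=(s(b)-s(\rho))/(s(b)-s(a))$, citing the proof of Lemma~\ref{lem:ex1_worstcase}.

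\emph{Your route.} You verify the candidate directly in two dimensions. Since $\mathcal{L}\bigl(h(\|\bm{x}\|)\bigr)=(\mathcal{L}_R h)(\|\bm{x}\|)$, your check $\mathcal{L}u=0$ is the same computation as the paper's scale-function ODE. The difference is that you then run the optional-stopping argument for $\bm{X}$ itself. This is the $\alpha=0$ version of the proof of Lemma~\ref{diri}, with the explicit bound $0\le u\le 1$ replacing the maximum principle.

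\emph{What each approach buys.} Your version avoids several steps the paper treats informally: the rotational-invariance claim, the construction of the scalar Brownian motion $B_t$ (via L\'evy's characterization), and the transfer of hitting times from $\bm{X}$ to $R$. It also yields $\mathbb{P}_{\mathrm{RA}}(\bm{x})=P(\|\bm{x}\|)$ for every $\bm{x}$ in the annulus at once. The paper's route gives a derivation rather than a verification: it explains where the formula for $P$ comes from, whereas you take $P$ as given.

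The remaining technical points are handled adequately: identifying $\{\|\bm{X}_{\tau_{\mathcal{D}}}\|=0.25\}$ with $\{\tau_r<\tau_{\mathcal{X}}\}$, and noting that $u$ is smooth on a neighborhood of the closed annulus. The latter lets It\^o's formula apply to the stopped process, for example after modifying $u$ near the origin.
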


\begin{proof}
Let
\[
R_t:=\|\bm X_t\|.
\]
Since the drift $\bm b(\bm x)=-\bm x$ and the diffusion matrix
$\sigma I_2$ are rotationally invariant, and both $\mathcal X$ and
$\mathcal X_r$ are rotationally symmetric, the reach-avoid probability
depends only on the radial coordinate. Hence, there exists a function
$P:(0.25,1)\to\mathbb R$ such that
\[
\mathbb P_{\mathrm{RA}}(\bm x)=P(\|\bm x\|).
\]

By It\^o's formula applied to $R_t=\|\bm X_t\|$, the radial process is a
one-dimensional diffusion satisfying
\[
dR_t
=
\left(
\frac{\sigma^2}{2R_t}-R_t
\right)dt
+
\sigma\,dB_t,
\qquad
0.25<R_t<1,
\]
where $B_t$ is a one-dimensional Brownian motion. Its infinitesimal
generator is therefore
\[
\mathcal L_R f(\rho)
=
\left(
\frac{\sigma^2}{2\rho}-\rho
\right)f'(\rho)
+
\frac{\sigma^2}{2}f''(\rho).
\]

Let
\[
\tau_{0.25}
:=
\inf\{t\ge0:R_t=0.25\},
\quad
\tau_1
:=
\inf\{t\ge0:R_t=1\}.
\]
Then
\[
P(\rho)
=
\mathbb P(\tau_{0.25}<\tau_1),
\qquad
\rho\in(0.25,1).
\]

A scale function $s$ for the radial diffusion satisfies
\[
\mathcal L_R s=0.
\]
Hence
\[
\frac{\sigma^2}{2}s''(\rho)
+
\left(
\frac{\sigma^2}{2\rho}-\rho
\right)s'(\rho)
=0,
\]
or equivalently,
\[
s''(\rho)
+
\left(
\frac1\rho-\frac{2\rho}{\sigma^2}
\right)s'(\rho)
=0.
\]
Therefore,
\[
s'(\rho)
=
C
\exp\left(
-\int
\left(
\frac1\rho-\frac{2\rho}{\sigma^2}
\right)d\rho
\right)
=
C\frac1\rho
\exp\left(\frac{\rho^2}{\sigma^2}\right).
\]
Since a scale function is determined only up to a positive affine
transformation, we may choose
\[
s'(\rho)
=
\frac{1}{\rho}
\exp\left(\frac{\rho^2}{\sigma^2}\right),
\qquad \rho\in(0.25,1).
\]
Thus, choosing an arbitrary reference point
$\rho_\ast\in(0.25,1)$, we may define
\[
s(\rho)
=
\int_{\rho_\ast}^{\rho}
\frac{1}{r}
\exp\left(\frac{r^2}{\sigma^2}\right)\,dr.
\]
The particular choice of $\rho_\ast$ is immaterial, since different choices differ only by an additive constant and therefore yield the same hitting probabilities.

For a one-dimensional diffusion with scale function $s$, the
probability of hitting $a$ before $b$, starting from
$\rho\in(a,b)$, is
\[
\mathbb P_\rho(\tau_a<\tau_b)
=
\frac{s(b)-s(\rho)}{s(b)-s(a)}.
\]
Please refer to the proof in Lemma \ref{lem:ex1_worstcase}. 

Taking
\[
a=0.25,
\qquad
b=1,
\]
we obtain
\[
P(\rho)
=
\frac{
\displaystyle
\int_\rho^1
\frac1r
\exp\left(\frac{r^2}{\sigma^2}\right)\,dr
}{
\displaystyle
\int_{0.25}^1
\frac1r
\exp\left(\frac{r^2}{\sigma^2}\right)\,dr
}.
\]
Equivalently, using
\[
\frac{d}{dz}\operatorname{Ei}(z)=\frac{e^z}{z},
\]
with $z=r^2/\sigma^2$, we have
\[
\int
\frac1r
\exp\left(\frac{r^2}{\sigma^2}\right)\,dr
=
\frac12
\operatorname{Ei}\left(\frac{r^2}{\sigma^2}\right).
\]
Thus,
\[
P(\rho)
=
\frac{
\operatorname{Ei}\left(\frac1{\sigma^2}\right)
-
\operatorname{Ei}\left(\frac{\rho^2}{\sigma^2}\right)
}{
\operatorname{Ei}\left(\frac1{\sigma^2}\right)
-
\operatorname{Ei}\left(\frac{0.25^2}{\sigma^2}\right)
}.
\]

For $\sigma=0.3$ and $\bm x_0=(0.75,0)$, we have
$\|\bm x_0\|=0.75$, and therefore
\[
\mathbb P_{\mathrm{RA}}(0.75,0)
=
P(0.75)
=
\frac{
\displaystyle
\int_{0.75}^1
\frac1r
\exp\left(\frac{r^2}{0.09}\right)\,dr
}{
\displaystyle
\int_{0.25}^1
\frac1r
\exp\left(\frac{r^2}{0.09}\right)\,dr
}
\]
We complete the proof.
\end{proof}

\end{document}